\documentclass{article}
\textwidth 5.99in \textheight 9.24in \topmargin -0.5in
\parskip 6pt
\oddsidemargin=0.3in \evensidemargin=0.3in

\usepackage{amsmath}
\usepackage{amssymb}
\usepackage{amsfonts,,mathdots}
\usepackage{latexsym,longtable}
\usepackage[colorlinks,linkcolor=blue,anchorcolor=blue,citecolor=green,CJKbookmarks=True]{hyperref}
\usepackage{tikz}
\usetikzlibrary{matrix}
\begin{document}
\newcommand{\B}{{\cal B}}
\newcommand{\D}{{\cal D}}
\newcommand{\E}{{\cal E}}
\newcommand{\F}{{\cal F}}
\newcommand{\A}{{\cal A}}
\newcommand{\Hh}{{\cal H}}
\newcommand{\Pp}{{\cal P}}
\newcommand{\Z}{{\bf Z}}
\newcommand{\T}{{\cal T}}
\newcommand{\ZZ}{{\mathbb{Z}}}
\newcommand{\qed}{\hphantom{.}\hfill $\Box$\medbreak}
\newcommand{\proof}{\noindent{\bf Proof \ }}
\renewcommand{\theequation}{\thesection.\arabic{equation}}
\newtheorem{theorem}{Theorem}[section]
\newtheorem{lemma}[theorem]{Lemma}
\newtheorem{corollary}[theorem]{Corollary}
\newtheorem{remark}[theorem]{Remark}
\newtheorem{example}[theorem]{Example}
\newtheorem{definition}[theorem]{Definition}
\newtheorem{construction}[theorem]{Construction}
\newtheorem{open}[theorem]{Open problem}
\newtheorem{claim}[theorem]{Claim}
\def\f#1{{\mathbb{F}}_{#1}}


\medskip
\title{Linear $(2,p,p)$-AONTs do Exist\thanks{Research supported by NSFC grant 11431003 (L. Ji). E-mail: xinw@suda.edu.cn, cuijie@suda.edu.cn, jilijun@suda.edu.cn
 }}

 \author{{\small    Xin Wang\thanks{Corresponding author}, Jie Cui and Lijun Ji} \\
 {\small Department of Mathematics, Soochow University, Suzhou 215006, China}
}

\date{}
\maketitle
\begin{abstract}
\noindent \\ A $(t,s,v)$-all-or-nothing transform (AONT) is a bijective mapping defined on $s$-tuples over an alphabet of size $v$, which satisfies that if any $s-t$ of the $s$ outputs are given, then the values of any $t$ inputs are completely undetermined. When $t$ and $v$ are fixed, to determine the maximum integer $s$ such that a $(t,s,v)$-AONT exists is the main research objective. In this paper, we solve three open problems proposed in [IEEE Trans. Inform. Theory 64 (2018), 3136-3143.] and show that there do exist linear $(2,p,p)$-AONTs. Then for the size of the alphabet being a prime power, we give the first infinite class of linear AONTs which is better than the linear AONTs defined by Cauchy matrices. Besides, we also present a recursive construction for general AONTs and a new relationship between AONTs and orthogonal arrays.

\medskip

\noindent {\bf Keywords}: All-or-nothing transforms, invertible matrices, cyclic codes, product construction, orthogonal arrays
\medskip


\end{abstract}


\section{Introduction}

The investigation of all-or-nothing transforms dates back to \cite{Rivest}, in which Rivest suggested using it as a preprocessing for block ciphers in the setting of computational security. However, little attention was attracted to this topic until Stinson proposed unconditionally secure all-or-nothing transforms in \cite{Stinson2001}. Later D'Arco et al. \cite{DES2016} introduced more general types of unconditionally secure all-or-nothing transforms.

We begin with the following definition.

\begin{definition}
Let $X$ be a finite set known as an alphabet. Let $s$ be a positive integer and consider a map $\phi: X^s \rightarrow X^s$. For an input $s$-tuple, say $x=(x_1,\dots,x_s)$, $\phi$ maps it to an output $s$-tuple, say $y=(y_1,\dots,y_s)$, where $x_i,y_i\in X$ for $1\le i \le s$. The map $\phi$ is an unconditionally secure {\it t-all-or-nothing transform} provided that the following properties are satisfied:

$\bullet$ $\phi$ is a bijection.

$\bullet$ If any $s-t$ out of the $s$ output values $y_1,\dots,y_s$ are fixed, then any $t$ of the input values $x_i$ $(1\le i \le s)$ are completely undetermined, in an information-theoretic sense.
\end{definition}

We will call such a map $\phi$ as a $(t,s,v)$-AONT, where $v=|X|$. And when $s$ and $v$ are clear or not relevant, we just call it a $t$-AONT.

The theory of AONTs is at a rudimentary stage. The study of Rivest \cite{Rivest} and Stinson \cite{Stinson2001} concentrated on the case $t=1$. The $1$-AONTs can provide a preprocessing called ``package transform'' for block ciphers. The idea is to use a $1$-AONT to encrypt plaintexts $(x_1,\dots,x_s)$ to $(y_1,\dots,y_s)=\phi(x_1,\dots,x_s)$. Due to the property of $1$-AONT, a partial decryption cannot provide any information about each symbol among the plaintexts. In \cite{DES2016}, the authors mainly concerned the case $t=v=2$ and introduced ``approximations'' to AONT. In this case, more theoretical results could be found in \cite{Zhang} and additional computational results could be found in \cite{NS2017}. Recently, Nasr Esfahani et al. \cite{Stinson2017} concentrated on the case $t=2$ over arbitrary alphabets and suggested $8$ interesting open problems.

 An AONT with alphabet $\mathbb{F}_q$ is linear if each $y_i$, $i\in \{1,\ldots,s\}$ is an
$\mathbb{F}_q$-linear function of $x_1, \ldots, x_s$. Then, we can write
\begin{equation}
\label{def-LAONT}
(y_{1},y_{2},\ldots,y_{s})= (x_1,\ldots,x_s)M^{-1}\ {\rm and}\ (x_1,\ldots,x_s) = (y_{1},y_{2},\ldots,y_{s})M,
\end{equation}
where $M$ is an invertible $s$ by $s$ matrix with entries from  $\mathbb{F}_q$. Subsequently, when we refer
to a ``linear AONT", we mean the matrix $M$ that transforms $(y_{1},y_{2},\ldots,y_{s})$ to $(x_1,\ldots,x_s)$, as specified in (\ref{def-LAONT}).

D'Arco et al. characterized linear all-or-nothing transforms in terms of
submatrices of the matrix $M$ as follows.

\begin{lemma}
 [{\rm \cite[Lemma 1]{DES2016}}] Suppose that $q$ is a prime power and $M$ is an invertible $s$ by $s$ matrix with entries from $\mathbb{F}_q$. Then $M$ defines a linear $(t,s,q)$-AONT if and only if every $t$ by $t$ submatrix of $M$ is invertible.
 \end{lemma}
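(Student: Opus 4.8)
The plan is to translate the information-theoretic AONT condition into a statement about the invertibility of submatrices, working directly from the relation $(x_1,\dots,x_s)=(y_1,\dots,y_s)M$. Since bijectivity of $\phi$ is equivalent to invertibility of $M$, which is already assumed, the entire content lies in the second AONT property, and it suffices to analyze, for each fixed choice of which outputs are revealed and which inputs we probe, the dependence of those inputs on the unrevealed outputs.

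First I would fix a set $J\subseteq\{1,\dots,s\}$ of $t$ input indices we wish to examine and a set $\bar F$ of $s-t$ output indices whose values are held fixed; write $F=\{1,\dots,s\}\setminus\bar F$ for the complementary set of $t$ \emph{free} outputs. Expanding the $j$-th coordinate of $(x_1,\dots,x_s)=(y_1,\dots,y_s)M$ gives $x_j=\sum_{i\in F}y_iM_{ij}+\sum_{i\in\bar F}y_iM_{ij}$ for $j\in J$. Because the second sum is a constant $c_j$ once the revealed outputs are fixed, the map sending the free outputs $(y_i)_{i\in F}$ to the probed inputs $(x_j)_{j\in J}$ is the affine map $(y_i)_{i\in F}\mapsto (y_i)_{i\in F}\,M[F,J]+(c_j)_{j\in J}$, whose linear part is precisely the $t\times t$ submatrix $M[F,J]$ of $M$ with rows indexed by $F$ and columns by $J$.

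Next I would make the phrase ``completely undetermined'' precise: taking the inputs (hence, via the invertible $M$, the outputs) to be uniformly distributed over $\mathbb{F}_q^s$, the free outputs $(y_i)_{i\in F}$ remain uniform over $\mathbb{F}_q^t$ conditioned on any fixed value of the revealed outputs. The probed inputs $(x_j)_{j\in J}$ are then uniform over $\mathbb{F}_q^t$, i.e. completely undetermined, exactly when the affine map above is a bijection, which (both sides having cardinality $q^t$) happens if and only if $M[F,J]$ is invertible. This yields both directions simultaneously: if every $t\times t$ submatrix of $M$ is invertible then the condition holds for every $J$ and $\bar F$, and conversely the AONT property forces $M[F,J]$ to be invertible for each such pair.

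Finally, observing that as $\bar F$ ranges over all $(s-t)$-subsets and $J$ over all $t$-subsets the pairs $(F,J)$ enumerate exactly all $t\times t$ submatrices of $M$ completes the equivalence. The only genuinely delicate point is the correct formalization of ``undetermined in an information-theoretic sense'' as uniformity of the relevant conditional distribution; once that is pinned down, the argument reduces to the elementary fact that an affine map on $\mathbb{F}_q^t$ is surjective (equivalently bijective) if and only if its linear part is invertible.
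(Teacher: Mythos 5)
The paper cites this lemma from \cite{DES2016} and does not reproduce a proof, so there is no internal argument to compare against; your proof is correct and is essentially the standard one for this result. The reduction to the affine map whose linear part is the $t\times t$ submatrix $M[F,J]$, combined with formalizing ``completely undetermined'' as uniformity of the conditional distribution of the probed inputs given the revealed outputs (which is where the only delicate point lies, and which you handle correctly), is exactly how the equivalence is established in the cited source.
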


Next, we review some known results on linear AONTs.

\begin{theorem}[\rm{\cite[Theorem 2]{Stinson2001}}]\label{cauchy}
Suppose $q$ is a prime power and $q\ge 2s$. Then there is a linear transform that is simultaneously a $(t,s,q)$-AONT for all $t$ such that $1\leq t\leq s$.
\end{theorem}

\begin{theorem}[\rm{\cite[Theorem 14]{Stinson2017}}]\label{upper}
There is no linear $(2,q+1,q)$-AONT for any prime power $q>2$.
\end{theorem}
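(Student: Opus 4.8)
The plan is to invoke the characterization of D'Arco et al.\ (the lemma stated above) to recast the statement purely in matrix terms: a linear $(2,q+1,q)$-AONT exists if and only if there is an \emph{invertible} $(q+1)\times(q+1)$ matrix $M$ over $\mathbb{F}_q$ all of whose $2\times 2$ submatrices are invertible. The first thing to notice---and the reason the theorem is subtle---is that the $2\times 2$-minor condition \emph{by itself} does not force a contradiction: for example, over $\mathbb{F}_3$ one can write down a (necessarily singular) $4\times 4$ matrix all of whose $2\times 2$ minors are nonzero. Hence the contradiction must come from combining the minor condition with the overall invertibility of $M$; concretely, I will show that the minor condition forces $\det M=0$ by exhibiting an explicit nonzero vector in the kernel of $M$.

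The first step is a normalization. I observe that each row of $M$ contains at most one zero: if row $i$ had zeros in two distinct columns $j,j'$, then for any other row $i'$ the $2\times 2$ submatrix on rows $\{i,i'\}$ and columns $\{j,j'\}$ would have a zero first row and hence be singular. Column and row permutations preserve both the minor condition and, up to sign, the value $\det M$, and scaling a column by a nonzero constant preserves both as well, so I am free to normalize. I then split according to the number of zeros in the first row: if it has two or more, the observation above already gives a contradiction; otherwise I permute the zero (if present) to column $1$ and rescale the remaining columns so that every other entry of the first row equals $1$.

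After this normalization the key consequence is that every lower row becomes a list of \emph{distinct} entries across the ``unit'' columns: for $i\ge 2$ and two unit columns $j\ne j'$, invertibility of the $2\times 2$ submatrix on rows $\{1,i\}$ and columns $\{j,j'\}$ reads $m_{ij'}-m_{ij}\ne 0$. If the first row had no zero at all, a lower row would then have to contain $q+1$ pairwise-distinct elements of $\mathbb{F}_q$, which is impossible since $|\mathbb{F}_q|=q$. If the first row had exactly one zero (at column $1$), then each lower row restricted to the $q$ unit columns consists of $q$ distinct elements, i.e.\ is all of $\mathbb{F}_q$, and therefore sums to $\sum_{\lambda\in\mathbb{F}_q}\lambda$. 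Here the hypothesis $q>2$ enters decisively: the elements of $\mathbb{F}_q$ are the roots of $x^q-x$, whose coefficient of $x^{q-1}$ vanishes for $q\ge 3$, so this sum is $0$ (whereas it equals $1$ when $q=2$). Since the first row contributes $q\cdot 1=0$ in $\mathbb{F}_q$, the vector $v$ that is $0$ in coordinate $1$ and $1$ in every other coordinate satisfies $Mv=0$; as $v\ne 0$ this forces $\det M=0$, contradicting the invertibility of $M$.

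The main obstacle, and the conceptual heart of the argument, is the realization in the first paragraph: one must resist trying to contradict the $2\times 2$-minor condition directly (it is consistent) and instead use it only to pin down the arithmetic of the rows finely enough that the \emph{required} invertibility of $M$ collapses. Everything else is bookkeeping: the at-most-one-zero observation, the column normalization, and the identity $\sum_{\lambda\in\mathbb{F}_q}\lambda=0$ for $q>2$, which is exactly what excludes the case $q=2$.
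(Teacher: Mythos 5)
Your proof is correct. The paper itself only cites this result from \cite{Stinson2017} without reproving it, but your argument is essentially the standard one (and it is the same technique the paper uses to rule out type $q-1$ standard form in its Theorem 2.4): normalize the first row to $1$'s, deduce from the $2\times 2$ minors that every other row has pairwise distinct entries on the unit columns, and conclude either that a row would need $q+1$ distinct elements of $\mathbb{F}_q$ or that every row sums to $0$ over those columns, so the all-but-one-ones vector lies in the kernel and $M$ is singular.
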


Given a prime power $q$, define
$$S(q)=\{s:\textup{there exists a linear $(2,s,q)$-AONT}\}.$$

By Theorems \ref{cauchy} and \ref{upper}, $S(q)$ is well defined and $S(q)\ne\emptyset$, the maximum element in $S(q)$ is denoted by $M(q)$.

 In this paper, we continue the study of $2$-AONTs. Our main contributions are as follows:

\begin{itemize}
  \item We give a negative answer to the open problem (4) in \cite{Stinson2017}.

  {\bf \cite[Open Problem 4]{Stinson2017}:} As mentioned in Section 2.2, we performed exhaustive searches for linear $(2,q,q)$-AONT in type $q-1$ standard form, for all primes and prime powers $q \leq 9$, and found that no such AONT exists. We ask if there exists any linear $(2,q,q)$-AONT in type $q-1$ standard form.
  \item  By establishing a connection between linear AONTs constructed and cyclic codes, we give positive answers to the open problems (1) and (2) in \cite{Stinson2017}.

     {\bf \cite[Open Problem 1]{Stinson2017}:} Are there infinitely many primes $p$ for which there exist linear $(2,p,p)$-AONT?

      {\bf \cite[Open Problem 2]{Stinson2017}:} Are there infinitely many primes $p$ for which there exist cyclic skew-symmetric $(2,p,p)$-AONT?

      As a consequence, when $p$ is a prime, $M(p)$ is completely determined.
  \item When $q$ is a prime power, we construct a $(2,\Phi(q),q)$-AONT and improve the lower bound on $M(q)$ in general. To the best of our knowledge, this is the first infinite class of AONTs  which is better than the AONTs defined by Cauchy matrices.
  \item For general AONTs, we present a recursive construction for general AONTs and a new relationship between AONTs and orthogonal arrays, and get a general construction for nonlinear $(2,3,n)$-AONT, except for $n=2,6$.
\end{itemize}

We summarize upper and lower bounds on $M(q)$ in Table \ref{table} and the main contributions in this paper are in bold form.

\begin{table}[!t]
\centering
\caption{Upper and Lower bounds on $M(q)$}
\label{table}
\begin{tabular}{c|c}\hline
  bound & authority \\
  \hline
  \textbf{$M(p)=p$ for $p$ is prime} & \textbf{Theorem \ref{main}} and Theorem 14 \cite{Stinson2017}  \\\hline
  \textbf{$\Phi(q)\leq M(q)\leq q$ for all prime powers $q$} & \textbf{Theorem \ref{main2}} and Theorem 14 \cite{Stinson2017} \\\hline
  $M(q)\geq q-1$ if $q=2^n$ and $q-1$ is prime & Theorem 11 \cite{Stinson2017} \\\hline
  $M(4)=4$ & Example 29 \cite{Stinson2017}\\
  $M(8)=7$ & Theorem 11 \cite{Stinson2017}\\
  $M(9)=8$ & Example 30 \cite{Stinson2017} \\

\end{tabular}
\end{table}

The rest of this article is organized as follows. Section~\ref{2} concerns linear $2$-AONT and the open problems proposed in \cite{Stinson2017}. Section~\ref{3} shows a new relation between general AONTs and orthogonal arrays and a construction for nonlinear $(2,3,n)$-AONT. A conclusion is made in Section~\ref{4}.

\section{Linear AONT}\label{2}

In this section, we describe some theoretical results for linear AONTs and answer three open problems proposed in \cite{Stinson2017}. The first one about standard form is reported in Section \ref{problem}. Then we observe a general construction of linear $(2,p,p)$-AONT for all primes $p$ and give a positive answer to the existence results which attain the theoretical upper bounds of Theorem \ref{upper}, see Section \ref{symmetric}. Finally, we construct an infinite class of AONTs which is better than the AONTs defined by Cauchy matrices over $\mathbb{F}_q$, where $q$ is a prime power.

\subsection{Linear $(2,q,q)$-AONT in \emph{type $q-1$ standard form}}\label{problem}
First we define a `standard form' for a linear $(2,s,q)$-AONT.
\begin{definition}
Suppose $A$ is a matrix for a linear $(2,s,q)$-AONT. Then we can permute the rows and columns so that the $0$'s comprise the first $\mu$ entries on the main diagonal of $A$. If $\mu=0$, then we can multiply rows and columns by nonzero field elements so that all the entries in the first rows and first columns consist of $1$'s. If $\mu\ne 0$, we can multiply rows and columns by nonzero field elements so that all the entries in the first row and first column consist of $1$'s, except for the entry in the top left corner, which is a $0$. Such a matrix $A$ is said to be of \emph{type $\mu$ standard form}.
\end{definition}

In \cite{Stinson2017}, the authors obtained the structural conditions for a linear $(2,q,q)$-AONT and suggested an open problem as follows.
\begin{lemma}[{\rm \cite[Lemma 16]{Stinson2017}}]
Suppose $A$ is a matrix for a linear $(2,q,q)$-AONT in standard form. Then $A$ is of \emph{type $q$} or \emph{type $q-1$}.
\end{lemma}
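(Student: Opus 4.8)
The plan is to reduce the statement to a counting fact about the zero entries of $A$, using only the characterization of D'Arco et al.\ recalled above: since $A$ is a $(2,q,q)$-AONT, every $2\times 2$ submatrix of $A$ is invertible. First I would record two easy consequences. If some row of $A$ contained two zeros, say in columns $j_1$ and $j_2$, then taking any second row would produce a $2\times 2$ submatrix with a zero top row, hence a singular one, contradicting the hypothesis; so every row has at most one zero entry, and the same argument applied to the columns (equivalently, to the transpose of $A$) shows every column has at most one zero entry. In particular the zeros of $A$ lie in pairwise distinct rows and pairwise distinct columns, which is exactly what makes the notion of standard form well posed: after permuting rows and columns they can all be placed on the main diagonal, so the parameter $\mu$ equals the total number of zero entries of $A$. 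It therefore suffices to show that $A$ has either $q-1$ or $q$ zeros.

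The key step, which I expect to be the heart of the argument, is to bound the number of rows that are entirely nonzero. Suppose two distinct rows $(a_1,\dots,a_q)$ and $(b_1,\dots,b_q)$ both had all entries nonzero. For any two columns $j\neq j'$ the invertibility of the corresponding $2\times 2$ submatrix gives $a_jb_{j'}-a_{j'}b_j\neq 0$, and dividing by the nonzero product $b_jb_{j'}$ this becomes $a_j/b_j\neq a_{j'}/b_{j'}$. Hence the $q$ quotients $a_1/b_1,\dots,a_q/b_q$ would be pairwise distinct nonzero elements of $\mathbb{F}_q$, which is impossible since $\mathbb{F}_q$ has only $q-1$ nonzero elements. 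Consequently at most one row of $A$ can be all-nonzero.

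To finish, I would combine the two observations. Every row has at most one zero, and all but at most one of the $q$ rows has at least one zero; each such row therefore has exactly one zero. Counting zeros row by row then yields a total of at least $q-1$ and at most $q$, that is $\mu\in\{q-1,q\}$, which is precisely the claim that $A$ is of type $q$ or type $q-1$. The only delicate point is the pigeonhole count in the second paragraph: it is exactly the coincidence $s=v=q$ --- the number of columns equals the number of field elements and exceeds the number of nonzero scalars by one --- that forces at least $q-1$ zeros, and this is where the specific parameters of a $(2,q,q)$-AONT enter.
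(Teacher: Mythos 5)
Your proof is correct and complete. Note that the paper you were given does not actually prove this lemma --- it is quoted verbatim from \cite[Lemma 16]{Stinson2017} --- but your argument (every row and column has at most one zero, and two entirely nonzero rows would force $q$ pairwise distinct ratios $a_j/b_j$ among the $q-1$ elements of $\mathbb{F}_q^*$, so at most one row is zero-free and hence $\mu\in\{q-1,q\}$) is exactly the standard pigeonhole argument used in that reference, and it is also the same ratio trick the present paper deploys in its Theorem on the nonexistence of type $q-1$ standard form.
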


{\bf \cite[Open Problem 4]{Stinson2017}:} As mentioned in Section 2.2, we performed exhaustive searches for linear $(2,q,q)$-AONT in type $q-1$ standard form, for all primes and prime powers $q \leq 9$, and found that no such AONT exists. We ask if there exists any linear $(2,q,q)$-AONT in type $q-1$ standard form.

In this subsection we give a negative answer to this open problem.
\begin{theorem}
For any prime power $q$, there does not exist a linear $(2,q,q)$-AONT in type $q-1$ standard form.
\end{theorem}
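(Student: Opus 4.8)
The plan is to assume for contradiction that such an AONT exists and then exhibit an explicit nonzero vector in the kernel of its defining matrix, contradicting invertibility. Write $A=(a_{ij})$ for the $q\times q$ matrix of a linear $(2,q,q)$-AONT in type $q-1$ standard form, so that $a_{ii}=0$ for $1\le i\le q-1$ and $a_{qq}\neq 0$. By the characterization of linear AONTs in terms of $2\times 2$ submatrices (the quoted lemma of D'Arco et al.), every $2\times 2$ submatrix of $A$ is invertible, and $A$ itself is invertible. First I would pin down the zero pattern: taking the $2\times 2$ submatrix on rows $\{i,k\}$ and columns $\{i,k\}$ with $i,k\le q-1$ gives $\det\left(\begin{smallmatrix}0&a_{ik}\\ a_{ki}&0\end{smallmatrix}\right)=-a_{ik}a_{ki}\neq 0$, so $a_{ik},a_{ki}\neq 0$; an analogous $2\times 2$ submatrix involving column $q$ (respectively row $q$) together with a third index shows $a_{iq}\neq 0$ and $a_{qj}\neq 0$. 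Hence $a_{ij}\neq 0$ whenever $i\neq j$, and in particular \emph{every} entry of row $q$ is nonzero. (This step uses $q\ge 3$ so that the needed third index exists; the boundary case $q=2$ is trivial and should be read separately, since there $\sum_{s\in\mathbb{F}_q^*}s=1\neq 0$.)

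The crucial step is to analyze, for each fixed $i\le q-1$, the pair consisting of row $i$ and row $q$. For $j,l\neq i$ the four entries $a_{ij},a_{il},a_{qj},a_{ql}$ are all nonzero, so invertibility of the corresponding $2\times 2$ submatrix gives $a_{qj}/a_{ij}\neq a_{ql}/a_{il}$. Thus the $q-1$ ratios $\{a_{qj}/a_{ij}: j\neq i\}$ are pairwise distinct and nonzero; since $|\mathbb{F}_q^*|=q-1$, they must be \emph{exactly} all of $\mathbb{F}_q^*$. In projective language, the columns of the $2\times q$ matrix formed by rows $i$ and $q$ realize every point of the projective line except $[1:0]$, this missing direction being forced precisely because row $q$ has no zero entry.

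With this in hand I would define $w=(w_1,\dots,w_q)^{\mathsf T}$ by $w_j=a_{qj}^{-1}$, which is well defined and nonzero because row $q$ has only nonzero entries. For row $q$ one gets $\sum_j a_{qj}w_j=\sum_j 1=q\cdot 1=0$ in $\mathbb{F}_q$. For a row $i\le q-1$ the diagonal term vanishes since $a_{ii}=0$, and the remaining sum is $\sum_{j\neq i}a_{ij}/a_{qj}=\sum_{j\neq i}(a_{qj}/a_{ij})^{-1}$; as $j$ ranges over $j\neq i$ the ratios $a_{qj}/a_{ij}$ sweep out $\mathbb{F}_q^*$, hence so do their inverses, and the sum equals $\sum_{s\in\mathbb{F}_q^*}s=0$ (valid for $q\ge 3$). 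Therefore $Aw=0$ with $w\neq 0$, so $A$ is singular, contradicting the invertibility required of the matrix of an AONT.

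The only genuine content is the middle step: recognizing that pairing each zero-diagonal row with the all-nonzero last row forces the associated ratios to exhaust $\mathbb{F}_q^*$, which is exactly what turns the entrywise reciprocals of row $q$ into a kernel vector. Everything else is bookkeeping with $2\times 2$ determinants and the elementary identities $\sum_{s\in\mathbb{F}_q^*}s=0$ and $q\cdot 1=0$ in $\mathbb{F}_q$. I expect the search for the right kernel vector to be the main obstacle; once one tests $w_j=a_{qj}^{-1}$ against the ratio fact, the contradiction falls out immediately, and this can be cross-checked on the smallest case $q=3$, where the forced vector is indeed a null vector of the (necessarily singular) candidate matrix.
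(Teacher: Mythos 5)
Your proof is correct and is essentially the paper's argument in disguise: your kernel vector $w_j=a_{qj}^{-1}$ is exactly the all-ones vector after the paper's normalization (dividing each column by the corresponding entry of row $q$ so that the last row becomes all $1$'s), and your observation that the ratios $a_{qj}/a_{ij}$ exhaust $\mathbb{F}_q^{*}$ is the paper's observation that each of the first $q-1$ rows of the normalized matrix has $q$ pairwise distinct entries and hence sums to zero. Both arguments genuinely require $q\ge 3$ (since $\sum_{s\in\mathbb{F}_q^{*}}s=0$ fails for $q=2$); you at least flag this boundary case explicitly, which the paper does not.
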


\proof Suppose, on the contrary, that $M$ is a matrix for a linear $(2,q,q)$-AONT in type $q-1$ standard form.  Then
 \[ M=\left(
    \begin{array}{cccccc}
      0 &1 & 1 & \cdots & 1 & 1\smallskip \\
      1 &0 & m_{2,3} & \cdots & m_{2,q-1} & m_{2,q} \smallskip \\
      1 &m_{3,2} & 0 & \cdots & m_{3,q-1}& m_{3,q}  \smallskip\\
     \vdots&\vdots &  \vdots & \ddots& \vdots & \vdots \smallskip \\
      1& m_{q-1,2} & m_{q-1,3} & \cdots & 0 & m_{q-1,q}\smallskip \\
      1& m_{q,2} & m_{q,3} & \cdots & m_{q,q-1} & m_{q,q}\smallskip \\
    \end{array}
  \right),\]
  where $m_{q,q}\neq 0$.  Since every 2 by 2 submatrix of $M$ is invertible, $m_{i,j}\neq 0$ for any $i\neq j$. Let
   \[ {M_1}=\left(
    \begin{array}{cccccc}
      0 &1/m_{q,2} & 1/m_{q,3} & \cdots & 1/m_{q,q-1} & 1/m_{q,q}\smallskip \\
      1 &0 & m_{2,3}/m_{q,3} & \cdots & m_{2,q-1}/m_{q,q-1} & m_{2,q}/m_{q,q} \smallskip \\
      1 &m_{3,2}/m_{q,2} & 0 & \cdots & m_{3,q-1}/m_{q,q-1}& m_{3,q}/m_{q,q}  \smallskip\\
     \vdots&\vdots &  \vdots & \ddots& \vdots & \vdots \smallskip \\
      1& m_{q-1,2}/m_{q,2} & m_{q-1,3}/m_{q,3} & \cdots & 0 & m_{q-1,q}/m_{q,q}\smallskip \\
      1& 1& 1 & \cdots & 1 & 1\smallskip \\
    \end{array}
  \right).\]
Clearly, $M_1$ is also invertible and every 2 by 2 submatrix of $M_1$ is invertible.
Consider 2 by 2 submatrices from the last row and and one row out of the first $q-1$ row. It is easy to see that
the first $q-1$ rows each have distinct entries. It follows that the sum of each row of $M_1$ is zero and the matrix $M_1$ is singular, a contradiction.\qed

\begin{corollary}
Suppose $A$ is a matrix for a linear $(2,q,q)$-AONT in standard form, then $A$ is of \emph{type $q$}.
\end{corollary}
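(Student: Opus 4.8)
The plan is to combine the two results that immediately precede the statement, since the corollary is a pure syllogism rather than a fresh argument. By the cited Lemma~16 of \cite{Stinson2017}, any matrix $A$ for a linear $(2,q,q)$-AONT in standard form is forced into exactly one of two shapes: \emph{type $q$} standard form or \emph{type $q-1$} standard form. This classification already supplies the structural backbone of the proof, so the only remaining task is to eliminate one of the two alternatives.

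First I would invoke the theorem established just above, which asserts that for every prime power $q$ there is no linear $(2,q,q)$-AONT in \emph{type $q-1$} standard form. Hence the \emph{type $q-1$} branch of the dichotomy cannot be realized by the matrix $A$ under consideration. The only surviving possibility from Lemma~16 is then \emph{type $q$} standard form, which is precisely the assertion of the corollary.

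There is essentially no obstacle at this stage: all of the substantive work was carried out in the preceding theorem, whose argument passes from $M$ to the normalized matrix $M_1$, observes that the first $q-1$ rows must have pairwise distinct entries (forced by the invertibility of the $2\times 2$ submatrices drawn from the all-ones last row together with each earlier row), and concludes that every row sum of $M_1$ vanishes, making $M_1$ singular and contradicting invertibility. The single point worth verifying when assembling the corollary is that the hypotheses of the two cited results align with those stated here, namely that $A$ is a matrix for a linear $(2,q,q)$-AONT in standard form over $\mathbb{F}_q$ with $q$ a prime power; once that is confirmed, the conclusion follows by elimination.
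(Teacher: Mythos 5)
Your proof is correct and matches the paper's (implicit) reasoning exactly: the corollary is stated as an immediate consequence of combining Lemma~16 of the cited reference with the preceding theorem ruling out type $q-1$, which is precisely the elimination argument you give.
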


\subsection{Construction for linear $(2,p,p)$-AONT for all primes $p$}\label{symmetric}

In \cite{Stinson2017}, the authors did an exhaustive search for a special subclass of linear $(2,p,p)$-AONTs and found that there exists a linear $(2,p,p)$-AONT for each value $p\in \{3,5,7,11,13,17,19,23,29\}$ and proposed the following question: are there infinitely many primes $p$ for which there exist linear $(2,p,p)$-AONT. We will give a construction to answer this problem.

\begin{construction}\label{inver}
Let $p$ be a prime and $A$ be a $p\times p$ matrix over $\f{p}$, where $A(s,t)$ denotes the entry in the $s$-th row and $t$-th column of $A$. Let $A(s,s)=0$ for $s=0,1,\ldots,p-1$, $A(s,1)=1$ for $s=1,2,\ldots,p-1$ and $A(s,t)=(s-t)^{-1}$ for $s=0,1,\ldots,p-1$, $t=1,2,\ldots,p-1$, $s\neq t$.
\end{construction}

\begin{example}
When $p=5$, the matrix $A$ would be
$$\left(
        \begin{array}{ccccc}
          0&4 & 2 & 3 & 1 \\
          1&0 & 4 & 2 & 3 \\
          1&1 & 0 & 4 & 2 \\
          1&3 & 1 & 0 & 4 \\
          1&2 & 3 & 1 & 0 \\
        \end{array}
      \right).$$
When $p=7$, the matrix $A$ would be
$$\left(
  \begin{array}{ccccccc}
    0&6 & 3 & 2 & 5 & 4 & 1 \\
    1&0 & 6 & 3 & 2 & 5 & 4 \\
    1&1 & 0 & 6 & 3 & 2 & 5 \\
    1&4 & 1 & 0 & 6 & 3 & 2 \\
    1&5 & 4 & 1 & 0 & 6 & 3 \\
    1&2 & 5 & 4 & 1 & 0 & 6 \\
    1&3 & 2 & 5 & 4 & 1 & 0 \\
  \end{array}
\right).$$
\end{example}

\begin{lemma}\label{twobytwo}
For any prime $p$, any $2\times 2$ submatrix of the matrix in Construction {\rm\ref{inver}} is invertible.
\end{lemma}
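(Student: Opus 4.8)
The plan is to view the matrix $A$ as a perturbed Cauchy matrix: off the main diagonal and outside the first column its entries are exactly $A(s,t)=(s-t)^{-1}$, while the diagonal is zeroed out and the first column (index $0$) is replaced by all $1$'s. A $2\times 2$ submatrix is specified by two distinct rows $i,j$ and two distinct columns $k,\ell$, and I must show its determinant $A(i,k)A(j,\ell)-A(i,\ell)A(j,k)$ is nonzero in $\f p$. I would organize the argument according to whether the first column is one of the two chosen columns.

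First I would treat the main case, where both chosen columns satisfy $k,\ell\neq 0$, so every entry is either a Cauchy entry $(s-t)^{-1}$ or a diagonal $0$. The core computation is the Cauchy-minor identity: when none of the four positions lies on the diagonal, clearing denominators gives
\[ A(i,k)A(j,\ell)-A(i,\ell)A(j,k)=\frac{(i-j)(\ell-k)}{(i-k)(j-\ell)(i-\ell)(j-k)}, \]
whose numerator $(i-j)(\ell-k)$ is nonzero because the rows and columns are distinct. If instead one or two of the four positions lie on the diagonal, the corresponding entries vanish and the determinant collapses to $\pm$ a single product of Cauchy entries; each surviving factor has the form $(s-t)^{-1}$ with $s\neq t$ (the only way a position is off the diagonal), hence is nonzero, so the determinant is again nonzero.

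Next I would handle the remaining case, where one chosen column is the first column, say $k=0$ and $\ell\neq 0$; the first-column entries $A(i,0)$ and $A(j,0)$ each equal $1$ unless the row index is $0$. If neither $i$ nor $j$ equals $0$, the determinant simplifies to $A(j,\ell)-A(i,\ell)$, a difference of two entries in column $\ell$; these differ because $(j-\ell)^{-1}\neq(i-\ell)^{-1}$ whenever $i\neq j$, and if one of them is a diagonal $0$ the other is a nonzero Cauchy entry, so the difference is nonzero. If one row is $0$, say $i=0$, then $A(0,0)=0$ kills one product and the determinant equals $\pm A(0,\ell)=\pm(-\ell)^{-1}\neq 0$. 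This exhausts all configurations.

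The main obstacle is precisely the first column: because it does not follow the Cauchy pattern $s^{-1}$ but is set to $1$, the clean minor identity of the first case does not apply to any submatrix meeting column $0$, and those submatrices must be checked by the separate direct computation above. A secondary point requiring care is that the diagonal entries are defined to be $0$ rather than the undefined $(s-s)^{-1}$; I must therefore isolate the positions lying on the diagonal by case analysis instead of blindly substituting into the Cauchy formula, and verify that in every such degeneration at least one of the two products in the determinant survives as a nonzero Cauchy entry.
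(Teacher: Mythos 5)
Your proof is correct and follows essentially the same route as the paper: a direct case analysis on whether the chosen $2\times 2$ submatrix meets the all-ones first column or the zero diagonal, with the generic case settled by the Cauchy determinant identity $\det = (i-j)(\ell-k)/\bigl((i-k)(j-\ell)(i-\ell)(j-k)\bigr)$ and the degenerate cases collapsing to a single nonzero product or difference. If anything, your write-up is slightly more thorough than the paper's, whose three cases leave some overlaps (e.g.\ a diagonal zero occurring in a submatrix that also meets column $0$) implicit.
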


\begin{proof}
Consider a submatrix $A'$ defined by rows $i,j$ and columns $i',j'$, where $i<j$ and $i'<j'$. We consider the following cases:
\begin{enumerate}
  \item If $i=i'$ (or $i=j'$), then $\det(A')=-A(i,j')A(j,i')\ne 0$.
  \item If $i'=0$ and $i\ne 0$, then $\det(A')=A(j,j')-A(i,j')\ne 0$.
  \item If $i'\ne0$, $i\ne i',j'$ and $j\ne i',j'$, then $\det(A')=\frac{1}{i-i'}\frac{1}{j-j'}-\frac{1}{i-j'}\frac{1}{j-i'}$, so $\det(A')=0$ if and only if $ii'+jj'=ij'+i'j$. This condition is equivalent to $(i-j)(i'-j')=0$, which happens if and only if $i=j$ or $i'=j'$. We assumed that $i<j$ and $i'<j'$, so $A'$ is invertible.
\end{enumerate}
It is enough to show that any $2\times 2$ submatrix of the matrix in Construction \ref{inver} is invertible.\qed
\end{proof}

In order to show the construction above yields a linear $(2,p,p)$-AONT for $p>2$, it remains to show that $A$ is invertible. The trick of the proof is to introduce an auxiliary matrix which is closely related to the matrix we construct in Construction \ref{inver}.

\begin{construction}\label{auxiliary}
Let $p$ be a prime and $B$ be a $p\times p$ matrix over $\f{p}$, where $B(s,t)$ denotes the entry in the $s$-th row and $t$-th column of $B$. Let $B(s,s)=0$ for $s=0,1,\ldots,p-1$ and $B(s,t)=(s-t)^{-1}$ for $s=0,1,\ldots,p-1$, $t=0,1,\ldots,p-1$, $s\neq t$.
\end{construction}

\begin{remark}\label{rem}
Since the last $p-1$ columns of $A$ each contain every element of $\mathbb{F}_p$ exactly once, $A$ is invertible if and only if the lower right $p-1$ by $p-1$ submatrix of $A$ $($which is the same as the lower right $p-1$ by $p-1$ submatrix of $B)$ is invertible. Since each column and each row of $B$ contain every element of $\mathbb{F}_p$ exactly once, rank $(B)\leq p-1$, thereby, to prove the lower right $p-1$ by $p-1$ submatrix of $A$ is invertible  is equivalent to show the fact that rank($B$)$=p-1$.
\end{remark}

Next, we will regard the matrix $B$ as a generator matrix of  a cyclic code in coding theory. Before proceeding further, let us introduce some basic facts about cyclic codes.

A linear code $\cal C$ of length $n$ over $\mathbb{F}_q$ is cyclic provided that for each vector $\mathbf{c}=(c_0,c_1,\ldots,c_{n-1})$ in $\cal C$ the vector $(c_{n-1},c_0,\ldots,c_{n-2})$, obtained from $\mathbf{c}$ by the cyclic shift of coordinates $i\mapsto i+1\pmod n$, is also in $\cal C$. When examining cyclic codes over $\mathbb{F}_q$, it is convenient to represent the codewords in polynomial form. There is a bijective correspondence between the vectors $\mathbf{c}=(c_0,c_1,\ldots,c_{n-1})$ in $\mathbb{F}_q^n$ and the polynomials $c(x)=c_0+c_1x+\cdots+c_{n-1}x^{n-1}$ in $\mathbb{F}_q[x]$ of degree at most $n-1$. Notice that if $c(x)=c_0+c_1x+\cdots+c_{n-1}x^{n-1}$, then $xc(x)=c_{n-1}+c_0x+\cdots+c_{n-2}x^{n-1}$ if $x^n$ is set equal to $1$. Thus cyclic codes are ideals of $\mathbb{F}_q[x]/(x^n-1)$ and ideals of $\mathbb{F}_q[x]/(x^n-1)$ are cyclic codes. It is well known that there is a relation between the dimension of $\cal C$ and $\gcd(c(x),x^n-1)$, where $\cal C$ is generated by the polynomial $c(x)$ and the polynomial $\gcd(c(x),x^n-1)$ is called a generator polynomial.

\begin{theorem}[{\rm \cite[Theorem 4.2.1]{FECC}}]\label{dim}
Let $\cal C$ be a cyclic codes of length $n$ over $\mathbb{F}_q$ with a generator polynomial $g(x)$. Then the dimension of $\cal C$ is equal to $n-d$, where $d$ is the degree of $g(x)$.
\end{theorem}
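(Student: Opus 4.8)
The plan is to realize $\cal C$ as a principal ideal in the quotient ring $R=\mathbb{F}_q[x]/(x^n-1)$ and to exhibit an explicit basis of size exactly $n-d$. Since the generator polynomial arises here as $g(x)=\gcd(c(x),x^n-1)$, it divides $x^n-1$; I would write $x^n-1=g(x)h(x)$, so that $\deg h=n-d$. The candidate basis I propose is the set of $n-d$ polynomials $\{g(x),xg(x),x^2g(x),\ldots,x^{n-d-1}g(x)\}$, each regarded as a vector in $\mathbb{F}_q^n$ of degree at most $n-1$. Note that for $0\le i\le n-d-1$ we have $\deg(x^ig(x))=d+i\le n-1$, so these representatives require no reduction modulo $x^n-1$.

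Establishing linear independence is immediate: these $n-d$ polynomials have the pairwise distinct degrees $d,d+1,\ldots,n-1$, hence distinct leading monomials, so no nontrivial $\mathbb{F}_q$-combination of them can vanish.

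The substantive step is to show that they span $\cal C$. Any codeword has the form $c(x)\equiv a(x)g(x)\pmod{x^n-1}$ for some $a(x)\in\mathbb{F}_q[x]$. I would divide $a(x)$ by $h(x)$, writing $a(x)=\rho(x)h(x)+r(x)$ with $\deg r<n-d$. Then $a(x)g(x)=\rho(x)h(x)g(x)+r(x)g(x)=\rho(x)(x^n-1)+r(x)g(x)$, so modulo $x^n-1$ the codeword equals $r(x)g(x)$, a polynomial of degree less than $(n-d)+d=n$. Thus $c(x)=r(x)g(x)$ already as an element of $\mathbb{F}_q^n$, and this is precisely an $\mathbb{F}_q$-linear combination of $g(x),xg(x),\ldots,x^{n-d-1}g(x)$. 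Hence the proposed set is a basis and $\dim_{\mathbb{F}_q}{\cal C}=n-d$.

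The main obstacle is the spanning argument, whose key ingredient is the factorization $x^n-1=g(x)h(x)$: it is exactly the divisibility $g(x)\mid x^n-1$ that lets me trade any high-degree multiple $a(x)g(x)$ for the reduced representative $r(x)g(x)$ by absorbing the excess into a multiple of $x^n-1$. If one did not assume $g(x)\mid x^n-1$ a priori, I would first recover it from the fact that $R$ is a principal ideal ring, so that ${\cal C}=(g(x))$ for the monic $g(x)$ of least degree in $\cal C$, and then derive $g(x)\mid x^n-1$ by applying the division algorithm to $x^n-1$ and invoking minimality of $\deg g$.
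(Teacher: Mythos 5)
Your proof is correct and is the standard argument for this classical fact: exhibit $\{g(x),xg(x),\ldots,x^{n-d-1}g(x)\}$ as a basis, using distinct degrees for independence and division by $h(x)$ (where $x^n-1=g(x)h(x)$) for spanning. The paper itself gives no proof, citing the result from Huffman and Pless, and your argument is essentially the one found in that reference, so there is nothing to reconcile.
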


\begin{lemma}\label{rank}
rank$(B)=p-1$.
\end{lemma}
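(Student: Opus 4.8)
The plan is to exploit the circulant structure of $B$ that the excerpt has been setting up. Since $B(s,t)=(s-t)^{-1}$ depends only on the residue $s-t \pmod p$, the matrix $B$ is a circulant: each row is obtained from the one above it by a single right cyclic shift. Concretely, if I let $c(x)=\sum_{t=0}^{p-1}B(0,t)x^t = -\sum_{t=1}^{p-1}t^{-1}x^t \in \f{p}[x]$ be the polynomial attached to the top row, then row $s$ corresponds to $x^s c(x) \bmod (x^p-1)$, by the shift-as-multiplication-by-$x$ correspondence recalled just before Theorem \ref{dim}. Hence the row space of $B$ is spanned by $c(x),xc(x),\ldots,x^{p-1}c(x)$, i.e. it is exactly the ideal generated by $c(x)$ in $\f{p}[x]/(x^p-1)$, which is the cyclic code $\mathcal{C}$ generated by $c(x)$. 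Therefore $\mathrm{rank}(B)=\dim \mathcal{C}$.

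With this identification, Theorem \ref{dim} reduces the problem to a degree computation: $\dim\mathcal{C}=p-\deg g(x)$, where $g(x)=\gcd(c(x),x^p-1)$ is the generator polynomial, so it suffices to prove $\deg g(x)=1$. The decisive simplification is that over $\f{p}$ one has $x^p-1=(x-1)^p$; consequently $g(x)=(x-1)^m$, where $m$ is precisely the multiplicity of $x=1$ as a root of $c(x)$ (truncated at $p$). Thus the entire statement comes down to showing that $x=1$ is a root of $c(x)$ of multiplicity exactly one.

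First I would check $c(1)=0$. Since inversion permutes $\f{p}^{*}$, $\sum_{t=1}^{p-1}t^{-1}=\sum_{k=1}^{p-1}k=\frac{p(p-1)}{2}$, which vanishes in $\f{p}$ for odd $p$; hence $c(1)=-\frac{p(p-1)}{2}=0$ and $m\ge 1$. To get $m\le 1$ I would differentiate, using the cancellation $t\cdot t^{-1}=1$: this gives $c'(x)=-\sum_{t=1}^{p-1}x^{t-1}=-(1+x+\cdots+x^{p-2})$, so $c'(1)=-(p-1)\ne 0$ in $\f{p}$. Therefore $(x-1)^2\nmid c(x)$, whence $m=1$, $g(x)=x-1$, and $\mathrm{rank}(B)=p-\deg g(x)=p-1$.

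The argument is essentially bookkeeping once the circulant-to-cyclic-code dictionary is in place; the only genuine step is the multiplicity computation, and the point that needs care is the parity assumption, since $c(1)=0$ uses that $p$ is odd (indeed for $p=2$ one checks directly that $\mathrm{rank}(B)=2$, so the statement is understood for odd primes, which is the relevant range for $(2,p,p)$-AONTs). I would also make sure the orientation of the circulant (right shift $\leftrightarrow$ multiplication by $x$) matches the stated convention, but even a wrong guess only replaces $c(x)$ by a cyclic shift or by its reciprocal polynomial, neither of which alters the multiplicity of the root $x=1$, so the conclusion is unaffected.
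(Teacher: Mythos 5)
Your proof is correct and follows essentially the same route as the paper: identify $B$ as a circulant whose row space is the cyclic code generated by $f(x)=-\sum_{t=1}^{p-1}t^{-1}x^{t}$, use $x^{p}-1=(x-1)^{p}$ and Theorem \ref{dim}, and show $1$ is a root of $f$ of multiplicity exactly one via $f(1)=0$ and $f'(1)\neq 0$. You merely spell out the two evaluations that the paper leaves as ``not difficult to verify,'' and your remark that $f(1)=0$ requires $p$ odd is a fair (and correct) caveat the paper glosses over.
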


\begin{proof}
First we observe that the matrix $B$ is cyclic and $f(x)=0-x-\frac{1}{2}x^2-\cdots-\frac{1}{p-1}x^{p-1}$. It is not difficult to verify that $f(1)=0$, $f'(1)\ne 0$  and $x^p-1=(x-1)^p$ over $\f{p}$, then $\gcd({f(x),x^p-1})=x-1$. By Theorem \ref{dim}, rank$(B)=p-1$ follows.
\end{proof}\qed

\begin{theorem}\label{main}
There exists a linear $(2,p,p)$-AONT for all primes $p$.
\end{theorem}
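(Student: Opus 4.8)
The plan is to use the explicit matrix $A$ of Construction \ref{inver} as the witness and to verify, by the characterization of linear AONTs in terms of $2 \times 2$ submatrices stated above (D'Arco et al.), that $A$ defines the desired transform. By that characterization an invertible $p \times p$ matrix over $\mathbb{F}_p$ is a linear $(2,p,p)$-AONT exactly when every one of its $2 \times 2$ submatrices is invertible, so the proof reduces to two independent assertions about $A$: that each $2 \times 2$ submatrix is nonsingular, and that $A$ itself is nonsingular.

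The first assertion is precisely Lemma \ref{twobytwo}, whose case analysis reduces every $2 \times 2$ determinant either to a product of two nonzero inverse-difference entries or to the condition $(i-j)(i'-j') \neq 0$, valid because the chosen rows and columns are distinct. For the second assertion I would invoke Remark \ref{rem}: because each of the last $p-1$ columns of $A$ runs over all of $\mathbb{F}_p$, the invertibility of $A$ is equivalent to that of its lower right $(p-1) \times (p-1)$ block, which coincides with the corresponding block of the auxiliary matrix $B$ of Construction \ref{auxiliary}. Since every row and every column of $B$ is a permutation of $\mathbb{F}_p$, we have $\mathrm{rank}(B) \le p-1$, and the required block is full rank precisely when $\mathrm{rank}(B) = p-1$.

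Establishing $\mathrm{rank}(B) = p-1$ is the genuine obstacle, and it is exactly Lemma \ref{rank}. The key idea is to read $B$ as the circulant generated by $f(x) = -\sum_{k=1}^{p-1} k^{-1} x^k$, so that $\mathrm{rank}(B)$ equals the dimension of the cyclic code generated by $f$; since $x^p - 1 = (x-1)^p$ over $\mathbb{F}_p$ and one checks $f(1) = 0$ but $f'(1) \neq 0$, the generator polynomial $\gcd(f(x), x^p - 1)$ equals $x - 1$, whence Theorem \ref{dim} gives dimension $p - 1$. A direct determinant evaluation of the inverse-difference block seems far less tractable, so the cyclic-code interpretation is what makes the argument work. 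Assembling the two assertions, $A$ is an invertible matrix all of whose $2 \times 2$ submatrices are invertible, hence a linear $(2,p,p)$-AONT; as $p$ ranges over all primes, the theorem follows.
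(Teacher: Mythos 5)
Your proposal is correct and follows exactly the paper's route: it exhibits the matrix $A$ of Construction \ref{inver}, cites Lemma \ref{twobytwo} for the $2\times 2$ submatrices, and reduces invertibility of $A$ via Remark \ref{rem} to $\mathrm{rank}(B)=p-1$, which is Lemma \ref{rank} proved through the cyclic-code interpretation with generator polynomial $\gcd(f(x),x^p-1)=x-1$. The paper's own proof of Theorem \ref{main} is precisely the assembly of these two lemmas, so there is nothing to add.
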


\begin{proof}
The theorem follows from Lemmas \ref{twobytwo} and \ref{rank}.
\end{proof}\qed

\subsection{Existence results for linear $(2,\Phi(q),q)$-AONT for prime power $q=p^r$}
In this subsection, we discuss linear $(2,s,q)$-AONTs for prime powers $q=p^r$. As a consequence, we improve the lower bound on $M(q)$ in general and answer an open problem proposed in \cite{Stinson2017}. To the best of our knowledge, there are only two systematic results in this topic. In \cite{Stinson2001}, Cauchy matrices were mentioned as a possible method of constructing AONT, see Theorem \ref{cauchy}. Recently, Nasr Esfahani et al. \cite{Stinson2017} pointed out that Vandermonde matrices can be treated as a method of constructing AONTs and get the following theorem.

\begin{theorem}[\rm{\cite[Theorem 2.1]{Stinson2017}}]\label{power}
Suppose $q=2^n$, $q-1$ is prime and $s\leq q-1$. Then there exists a linear $(2,s,q)$-AONT over $\f{q}$.
\end{theorem}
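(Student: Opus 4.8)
The plan is to exhibit a single $s \times s$ matrix $M$ over $\f{q}$ and to verify the two conditions of the criterion of \cite[Lemma~1]{DES2016}, namely that $M$ is invertible and that every $2 \times 2$ submatrix of $M$ is invertible. Following the suggestion that Vandermonde matrices are natural candidates, I would index the rows by $s$ distinct \emph{nonzero} field elements $\alpha_1,\ldots,\alpha_s \in \f{q}^{*}$ and set the $(i,j)$ entry to $\alpha_i^{\,j-1}$ for $1 \le i,j \le s$, so that $M$ is the Vandermonde matrix on the nodes $\alpha_1,\ldots,\alpha_s$. Such a choice of distinct nonzero nodes is possible precisely because $s \le q-1 = |\f{q}^{*}|$; insisting on nonzero nodes is essential, since a zero node would give a row $(1,0,\ldots,0)$ and immediately produce a singular $2\times 2$ minor in the higher columns.

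The invertibility of $M$ itself is immediate, because the Vandermonde determinant equals $\prod_{i<j}(\alpha_j-\alpha_i)$, which is nonzero as the nodes are distinct. Thus the whole weight of the argument falls on the $2 \times 2$ minors. I would fix two rows $i \ne j$ and two columns carrying exponents $a<b$ with $0 \le a < b \le s-1$; the corresponding determinant is $\alpha_i^{a}\alpha_j^{b}-\alpha_i^{b}\alpha_j^{a}=\alpha_i^{a}\alpha_j^{a}\bigl(\alpha_j^{\,b-a}-\alpha_i^{\,b-a}\bigr)$. Since $\alpha_i,\alpha_j\ne 0$, this vanishes if and only if $(\alpha_j\alpha_i^{-1})^{\,b-a}=1$. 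Writing $\gamma=\alpha_j\alpha_i^{-1}$, distinctness of the nodes gives $\gamma \ne 1$, so the problem reduces to showing that $\gamma^{\,b-a}\ne 1$ for every admissible exponent gap.

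The crux — and the step I expect to carry all the difficulty — is ruling out $\gamma^{\,b-a}=1$, and this is exactly where the hypothesis that $q-1$ is prime becomes indispensable. The multiplicative group $\f{q}^{*}$ is cyclic of prime order $q-1$, so every non-identity element, in particular $\gamma$, has order exactly $q-1$. As the exponent satisfies $1 \le b-a \le s-1 \le q-2 < q-1$, no nonzero power below $q-1$ can return the identity, whence $\gamma^{\,b-a}\ne 1$ and the minor is nonzero; the two conditions of \cite[Lemma~1]{DES2016} are then both met and $M$ is the desired linear $(2,s,q)$-AONT. This also explains why the construction breaks down for composite $q-1$: an element of intermediate order $d \mid q-1$ would force a singular $2 \times 2$ submatrix as soon as the gap $b-a$ reaches $d$. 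I would remark in passing that the stated restriction $q=2^n$ is essentially forced, up to the trivial case $q=3$, since for any odd prime power $q>3$ the number $q-1$ is even and larger than $2$, hence never prime; so the prime-order hypothesis on $\f{q}^{*}$ already singles out characteristic two.
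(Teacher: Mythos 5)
Your proof is correct and follows exactly the approach the paper attributes to this result (which it only cites from \cite{Stinson2017} without reproving): a Vandermonde matrix on $s\le q-1$ distinct nonzero nodes, with the $2\times 2$ minors reduced to $(\alpha_j\alpha_i^{-1})^{b-a}\ne 1$ and that condition guaranteed because $\mathbb{F}_q^{*}$ has prime order $q-1$, so every non-identity element has full order. The verification of both conditions of \cite[Lemma~1]{DES2016} is complete, and your side remarks (nonzero nodes being necessary, and the primality of $q-1$ forcing $q=2^n$ apart from $q=3$) are accurate.
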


\begin{remark}
Since the parameter $s$ can attain $q-1$, by Theorem $\ref{upper}$, the result above yields a good construction for linear AONT. However, it actually requires that $2^n-1$ is a Mersenne prime, up to now, only $50$ such primes are known. Thus, the construction above cannot provide a infinite class of linear AONT.
\end{remark}

Next, we will provide a general construction of linear $(2,\Phi(q),q)$-AONT for all prime powers $q=p^r$. We divide our proof in two steps. First, we construct a $q-1$ by $q-1$ matrix  over $\f{q}$ such that any $2$ by $2$ submatrix is invertible. Then, we find a $\Phi(q)$ by $\Phi(q)$ submatrix which is invertible.

\begin{construction}\label{primepower}
Let $q=p^r$ be a prime power, $\alpha$ be a primitive element of $\f{q}$ and $P$ be a $q-1$ by $q-1$ matrix over $\f{q}$, where $P(s,t)$ denotes the entry in the $s$-th row and $t$-th column of $P$. Let $P(s,s)=0$ for $s=0,1,\ldots,q-2$ and $P(s,t)=\frac{\alpha^s}{\alpha^s-\alpha^t}$ for $s=0,1,\ldots,q-2$, $t=0,1,\ldots,q-2$, $s\neq t$.
\end{construction}

\begin{example}
When $q=8$, let $\alpha$ be a primitive element of $\f{8}$ defined by $\alpha^3+\alpha+1=0$, the matrix $P$ defined over $\f{8}$ would be
$$\left(
        \begin{array}{ccccccc}
          0&\alpha^4& \alpha & \alpha^6 & \alpha^2&\alpha^3&\alpha^5 \\
          \alpha^5&0 & \alpha^4 & \alpha & \alpha^6&\alpha^2&\alpha^3 \\
          \alpha^3&\alpha^5 & 0 & \alpha^4 & \alpha&\alpha^6&\alpha^2 \\
          \alpha^2&\alpha^3 & \alpha^5 & 0 & \alpha^4&\alpha&\alpha^6 \\
          \alpha^6&\alpha^2 & \alpha^3 & \alpha^5 & 0&\alpha^4&\alpha \\
          \alpha&\alpha^6&\alpha^2&\alpha^3&\alpha^5&0&\alpha^4\\
          \alpha^4&\alpha&\alpha^6&\alpha^2&\alpha^3&\alpha^5&0\\
        \end{array}
      \right).$$
When $q=9$, let $\beta$ be a primitive element of $\f{9}$ defined by $\beta^2+2\beta+2=0$, the matrix $P$ defined over $\f{9}$ would be
$$\left(
  \begin{array}{cccccccc}
    0&\beta^5 & \beta^3 & \beta^7 & 2 & \beta^6 & \beta&\beta^2 \\
    \beta^2&0&\beta^5 & \beta^3 & \beta^7 & 2 & \beta^6 & \beta\\
     \beta&\beta^2&0&\beta^5 & \beta^3 & \beta^7 & 2 & \beta^6  \\
     \beta^6&\beta&\beta^2&0&\beta^5 & \beta^3 & \beta^7 & 2   \\
    2 &\beta^6&\beta&\beta^2&0&\beta^5 & \beta^3 & \beta^7  \\
     \beta^7&2 &\beta^6&\beta&\beta^2&0&\beta^5 & \beta^3  \\
     \beta^3&\beta^7&2 &\beta^6&\beta&\beta^2&0&\beta^5  \\
     \beta^5 &\beta^3&\beta^7&2 &\beta^6&\beta&\beta^2&0 \\
  \end{array}
\right).$$
\end{example}

\begin{lemma}\label{two2}
For any prime power $q$, any $2\times 2$ submatrix of the matrix in Construction \ref{primepower} is invertible.
\end{lemma}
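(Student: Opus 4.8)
The plan is to show that for any $2\times 2$ submatrix of $P$ defined by rows $s_1<s_2$ and columns $t_1<t_2$, the determinant is nonzero. First I would compute the determinant in the generic case where all four indices give off-diagonal entries of the form $P(s,t)=\frac{\alpha^s}{\alpha^s-\alpha^t}$, and then handle the degenerate cases where one of the entries lies on the diagonal (i.e.\ some row index equals some column index, forcing a $0$ entry) separately, exactly as was done in the proof of Lemma \ref{twobytwo}.

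In the generic case the submatrix is
\[
A'=\begin{pmatrix} \dfrac{\alpha^{s_1}}{\alpha^{s_1}-\alpha^{t_1}} & \dfrac{\alpha^{s_1}}{\alpha^{s_1}-\alpha^{t_2}} \smallskip\\ \dfrac{\alpha^{s_2}}{\alpha^{s_2}-\alpha^{t_1}} & \dfrac{\alpha^{s_2}}{\alpha^{s_2}-\alpha^{t_2}} \end{pmatrix},
\]
and I would factor $\alpha^{s_1}\alpha^{s_2}$ out of the two rows and write
\[
\det(A')=\frac{\alpha^{s_1}\alpha^{s_2}}{(\alpha^{s_1}-\alpha^{t_1})(\alpha^{s_2}-\alpha^{t_2})(\alpha^{s_1}-\alpha^{t_2})(\alpha^{s_2}-\alpha^{t_1})}\,\Delta,
\]
where $\Delta=(\alpha^{s_1}-\alpha^{t_2})(\alpha^{s_2}-\alpha^{t_1})-(\alpha^{s_1}-\alpha^{t_1})(\alpha^{s_2}-\alpha^{t_2})$. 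Expanding $\Delta$, the $\alpha^{s_1}\alpha^{s_2}$ and $\alpha^{t_1}\alpha^{t_2}$ terms cancel, leaving a multiple of $(\alpha^{s_1}-\alpha^{s_2})(\alpha^{t_1}-\alpha^{t_2})$. Since $s_1\ne s_2$ and $t_1\ne t_2$ and $\alpha$ is primitive, these factors are nonzero, so $\Delta\ne 0$ and hence $\det(A')\ne 0$. The denominator is also nonzero precisely because the entries in question are genuinely off-diagonal, so the four differences $\alpha^{s_i}-\alpha^{t_j}$ are nonzero.

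For the degenerate cases, suppose one index coincidence places a $0$ on the diagonal of $A'$. If the two zeros fall on the main diagonal of $A'$ (i.e.\ $s_1=t_1$ and $s_2=t_2$), then $\det(A')$ is the negative product of the two off-diagonal entries, which are nonzero; if a single zero appears as one entry, the determinant reduces to a product (up to sign) of the remaining three entries, again nonzero, or to a difference of two reciprocal-type expressions that one checks directly as in the third bullet of Lemma \ref{twobytwo}. The main obstacle, and the step deserving the most care, is the algebraic simplification of $\Delta$ together with a clean enumeration of which index coincidences are possible: because $P$ has the circulant-like structure visible in the examples, I would want to verify that the only way an entry vanishes is when its row and column indices are equal, ruling out spurious cancellations and confirming that the generic formula applies whenever no such coincidence occurs.
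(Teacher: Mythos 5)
Your proposal is correct and follows essentially the same route as the paper: split off the degenerate cases where an index coincidence forces a zero entry (where the determinant collapses to $\pm$ a product of nonzero off-diagonal entries), and in the generic case reduce the determinant to the factor $(\alpha^{s_1}-\alpha^{s_2})(\alpha^{t_1}-\alpha^{t_2})$, which is nonzero since the exponents are distinct and lie in $\{0,\ldots,q-2\}$. The only nitpick is that in the single-zero case the determinant is a product of the \emph{two} entries on the other diagonal, not of ``the remaining three entries,'' but this does not affect the conclusion.
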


\begin{proof}
Consider a submatrix $P'$ defined by rows $i,j$ and columns $i',j'$, where $i<j$ and $i'<j'$. We consider the following cases:
\begin{enumerate}
  \item If $i=i'$ (or $i=j'$, $j=i'$, $j=j'$), then $\det(P')=-P(i,j')P(j,i')\ne 0$.
  \item Otherwise, $\det(P')=\frac{\alpha^i}{\alpha^i-\alpha^{i'}}\frac{\alpha^j}{\alpha^j-\alpha^{j'}}-\frac{\alpha^i}{\alpha^i-\alpha^{j'}}\frac{\alpha^j}{\alpha^j-\alpha^{i'}}$, so $\det(P')=0$ if and only if $(\alpha^i-\alpha^j)(\alpha^{i'}-\alpha^{j})=0$. This condition happens if and only if $i=j$ or $i'=j'$. We assumed that $i<j$ and $i'<j'$, so $P'$ is invertible.
\end{enumerate}
It is enough to show that any $2\times 2$ submatrix of the matrix in Construction \ref{primepower} is invertible.\qed
\end{proof}

Next, we will prove that the rank of $P$ is $\Phi(q)$.

\begin{lemma}\label{rank2}
rank$(P)=\Phi(q)$.
\end{lemma}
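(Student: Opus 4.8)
The plan is to exploit the circulant structure of $P$ and reduce the rank computation to a cyclic-code dimension count, exactly as in the proof of Lemma~\ref{rank}. First I would observe that $P(s,t)=\frac{\alpha^s}{\alpha^s-\alpha^t}=\frac{1}{1-\alpha^{\,t-s}}$ depends only on $t-s \pmod{q-1}$, so $P$ is a circulant matrix of order $n=q-1$ over $\f{q}$; its rows are the cyclic shifts of the $0$-th row, hence its row space is the cyclic code generated by $g(x)=\sum_{t=1}^{q-2}\frac{1}{1-\alpha^t}\,x^{t}$. By Theorem~\ref{dim} this gives $\mathrm{rank}(P)=n-\deg\gcd(g(x),x^{q-1}-1)$. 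Since $\gcd(q-1,p)=1$, the polynomial $x^{q-1}-1=\prod_{j=0}^{q-2}(x-\alpha^j)$ is separable with roots exactly the elements of $\f{q}^{*}$, so $\deg\gcd(g(x),x^{q-1}-1)$ equals the number of $j\in\{0,\dots,q-2\}$ with $g(\alpha^j)=0$. Everything thus reduces to counting the common roots.

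The computational heart, which I expect to be the main obstacle, is evaluating $g(\alpha^j)=\sum_{\gamma\in\f{q}^{*}\setminus\{1\}}\frac{\gamma^{j}}{1-\gamma}$ for every $j$. I would handle this with the partial-fraction identity $\frac{\gamma^{j}}{1-\gamma}=\frac{1}{1-\gamma}-\sum_{i=0}^{j-1}\gamma^{i}$ together with the standard power-sum values over $\f{q}^{*}$, namely $\sum_{\gamma\in\f{q}^{*}}\gamma^{i}=-1$ when $(q-1)\mid i$ and $0$ otherwise. A short calculation, using the substitution $\delta=1-\gamma$ to get $g(1)=\sum_{\delta\in\f{q}^{*}\setminus\{1\}}\delta^{-1}=-1$ for $q>2$, then yields the clean formula $g(\alpha^j)=j$ for $1\le j\le q-2$ and $g(1)=-1$, where $j$ is read in $\f{p}$. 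The partial-fraction reduction is what keeps this from requiring any determinant manipulation.

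Finally I would count the zeros: $g(\alpha^j)=0$ forces $j\ge 1$ and $p\mid j$, and the number of multiples of $p$ in $\{1,\dots,q-2\}$ is $p^{r-1}-1$. Hence $\deg\gcd(g(x),x^{q-1}-1)=p^{r-1}-1$ and
\[
\mathrm{rank}(P)=(q-1)-(p^{r-1}-1)=q-p^{r-1}=\Phi(q),
\]
as required. I would double-check the formula against the displayed $q=8$ and $q=9$ examples, for which it predicts ranks $4$ and $6$ respectively. The only delicate points are the separability of $x^{q-1}-1$ (which fails the pure-power shortcut available for $x^p-1$ in Lemma~\ref{rank}, forcing the explicit root count above) and the power-sum evaluation of $g(\alpha^j)$.
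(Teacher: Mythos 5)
Your proposal is correct and follows essentially the same route as the paper: identify $P$ as a circulant matrix, invoke the cyclic-code dimension theorem, and count the roots of the generating polynomial among the $(q-1)$-th roots of unity via the geometric-series/power-sum computation, arriving at exactly $p^{r-1}-1$ roots and hence rank $q-p^{r-1}=\Phi(q)$. Your unified formula $g(\alpha^j)=j$ is just a cleaner packaging of the paper's case analysis ($f(\alpha^{kp})=0$, $f(\alpha^{kp+r})=r$), so no substantive difference.
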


\begin{proof}
We observe that the matrix $P$ is cyclic and $f(x)=0+\frac{1}{1-\alpha}x+\frac{1}{1-\alpha^2}x^2+\cdots+\frac{1}{1-\alpha^{q-2}}x^{q-2}$. Since $x^{q-1}-1=(x-1)(x-\alpha)\cdots(x-\alpha^{q-2})$ in $\f{q}$, by Theorem \ref{dim}, in order to prove this lemma, we need to show $f(x)$ has exactly $p^{r-1}-1$ distinct roots in $\f{q}$.
\begin{claim}\label{claim}
$$\{x\in\f{q}\mid f(x)=0\}=\{\alpha^p,\alpha^{2p},\ldots,\alpha^{(p^{r-1}-1)p}\}.$$
\end{claim}
We consider the following three cases:
\begin{enumerate}
  \item If $x=1$, then $f(x)=\frac{1}{1-\alpha}+\frac{1}{1-\alpha^2}+\cdots+\frac{1}{1-\alpha^{q-2}}=-1\ne 0$.
  \item If $x=\alpha^{kp}$, where $1\leq k\leq p^{r-1}-1$, then $-f(\alpha^{kp})=1+\frac{1}{1-\alpha}(1-x)+\frac{1}{1-\alpha^2}(1-x^2)+\cdots+\frac{1}{1-\alpha^{q-2}}(1-x^{q-2})$. We expand  $-f(\alpha^{kp})$ as follows:
      $$\frac{1}{1-\alpha^i}(1-(\alpha^{kp})^i)=\frac{1}{1-\alpha^i}(1-(\alpha^{i})^{kp})=1+\alpha^i+(\alpha^i)^2+\cdots+(\alpha^i)^{kp-1}.$$
      Then $-f(\alpha^{kp})=q-1+\sum_{j=1}^{kp-1}\sum_{i=1}^{q-2}(\alpha^i)^j=q-1+(kp-1)\cdot(-1)=0$. Thus $\alpha^{kp}$, $1\leq k\leq p^{r-1}-1$, are the roots of $f(x)$.
  \item If $x=\alpha^{kp+r}$, where $1\leq r<p$ and $1\leq k\leq p^{r-1}-1$, then $-f(\alpha^{kp+r})=q-1+\sum_{j=1}^{kp+r-1}\sum_{i=1}^{q-2}(\alpha^i)^j=q-1+(kp+r-1)\cdot(-1)=-r\ne0$.
\end{enumerate}
This completes the proof of Claim \ref{claim}, and Lemma \ref{rank2} follows.
\end{proof}\qed

\begin{theorem}\label{main2}
There exists a linear $(2,\Phi(q),q)$-AONT for all primes power $q$.
\end{theorem}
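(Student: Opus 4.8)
The plan is to combine the two structural lemmas just established about the matrix $P$ from Construction \ref{primepower} with the characterization of linear AONTs in terms of invertible submatrices. By Lemma 2.1 (D'Arco et al.), a matrix defines a linear $(2,s,q)$-AONT precisely when it is invertible and every $2\times 2$ submatrix is invertible. Lemma \ref{two2} already guarantees the latter property for the full $q-1$ by $q-1$ matrix $P$, and Lemma \ref{rank2} tells us that $\operatorname{rank}(P)=\Phi(q)$. So the task reduces to extracting from $P$ an honest $\Phi(q)$ by $\Phi(q)$ \emph{invertible} submatrix, since such a submatrix inherits the ``every $2\times 2$ submatrix invertible'' property from $P$ and will therefore be a bona fide linear $(2,\Phi(q),q)$-AONT matrix.

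First I would invoke Lemma \ref{rank2} to fix $\Phi(q)$ linearly independent columns of $P$; call the resulting $(q-1)\times\Phi(q)$ matrix $P_1$, which has full column rank $\Phi(q)$. Since the row rank equals the column rank, $P_1$ in turn contains $\Phi(q)$ linearly independent rows, and selecting those rows yields a $\Phi(q)\times\Phi(q)$ square submatrix $P_2$ of $P$ whose rank is still $\Phi(q)$, hence invertible. The key point to emphasize is that $P_2$ is a submatrix of $P$, so by Lemma \ref{two2} every $2\times 2$ submatrix of $P_2$ is automatically invertible. Applying Lemma 2.1 to $P_2$ then gives exactly a linear $(2,\Phi(q),q)$-AONT, completing the argument.

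I do not anticipate a genuine obstacle here, since the heavy lifting has already been done: Lemma \ref{two2} handles the $2\times 2$ condition uniformly and Lemma \ref{rank2} pins down the rank via the cyclic-code computation. The only point requiring a touch of care is the standard linear-algebra fact that a matrix of rank $\Phi(q)$ necessarily contains a $\Phi(q)\times\Phi(q)$ nonsingular submatrix—one chooses $\Phi(q)$ independent columns and then $\Phi(q)$ independent rows of the resulting thin matrix—but this is routine and needs no new idea. Thus the theorem follows at once from Lemmas \ref{two2} and \ref{rank2} together with the submatrix characterization of linear AONTs.
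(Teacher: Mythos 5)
Your proposal is correct and follows exactly the route the paper intends: Lemma \ref{two2} gives the $2\times 2$ condition for all of $P$, Lemma \ref{rank2} gives $\operatorname{rank}(P)=\Phi(q)$, and one extracts an invertible $\Phi(q)\times\Phi(q)$ submatrix (which inherits the $2\times 2$ property) and applies the D'Arco--Nasr Esfahani--Stinson characterization. The paper's own proof is just the one-line citation of these two lemmas, with the submatrix-extraction step described in the surrounding prose, so your write-up merely makes that routine step explicit.
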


\begin{proof}
The theorem follows from Lemmas \ref{two2} and \ref{rank2}.
\end{proof}\qed

In \cite{Stinson2017}, the authors did an exhaustive search and observed a very interesting structure for linear $(2,q,q)$-AONTs in type $q$ standard form, which the authors call it `$\tau$-skew-symmetric'.  Suppose $M$ is a matrix for a linear $(2,q,q)$-AONT in type $q$ standard
form. We say that the matrix $M$ is $\tau$-skew-symmetric if for any pair of cells $(i,j)$ and $(j,i)$ of $M$, where $2\leq i,j\leq q$ and $i\ne j$, it holds that $M(i,j)+M(j,i)=\tau$. Furthermore, we say that $M$ is cyclic if $M_1$ (the lower right $q-1$ by $q-1$ submatrix of $M$) is a cyclic matrix. Up to equivalence, the authors in \cite{Stinson2017} actually found there was exactly one $(q-1)$-skew-symmetric $(2,q,q)$-AONT for each value $q\in\{3,5,7,11,13,17,19,23,29\}$ and proposed the following open problem:

{\bf \cite[Open Problem 2]{Stinson2017}:} Are there infinitely many primes $p$ for which there exist (cyclic) skew-symmetric $(2,p,p)$-AONT?

For the matrix $P$ in Construction \ref{primepower}, when $q=p$ is a prime, we construct a linear $(2,p-1,p)$-AONT. Let $Q$ be a $p$ by $p$ matrix in type $p$ standard form and choose the matrix $P$ as $Q_1$. Then the matrix $Q$ is as following.

\begin{construction}
\label{1-skew}
Let $p$ be a prime and $\alpha$ be a primitive element of $\f{p}$. Define a $p$ by $p$ matrix $Q$ over $\f{p}$ as follows, where $Q(s,t)$ denotes the entry in the $s$-th row and $t$-th column of $Q$. Let $Q(s,s)=0$ for $s=0,1,\ldots,p-1$, $Q(s,0)=1$ and $Q(0,s)=1$ for $s=1,2,\ldots,p-1$ and $Q(s,t)=\frac{\alpha^t}{\alpha^t-\alpha^s}$ for $s=1,2,\ldots,p-1$, $t=1,2,\ldots,p-1$ and $s\neq t$.
\end{construction}

By the property of the matrix $P$ in Construction \ref{primepower}, it is easy to see that $Q_1$ is cyclic and $1$-skew-symmetric ($Q(s,t)+Q(t,s)=1$ for $1\leq s,t\leq p-1$, $s\ne t$). By the same argument of Remark \ref{rem},  the matrix $Q$ is invertible and every $2\times 2$ submatrix of the matrix $Q$ is also invertible. For $\tau \in \mathbb{F}_p\setminus \{0\}$, when we replace the lower right $p-1$ by $p-1$ submatrix $Q_1$ of $Q$ with $\tau Q_1$, the resultant $p$ by $p$ matrix is $\tau$-skew-symmetric. The corollary is a direct consequence of what we observed, which answers the open problem mentioned above.

\begin{corollary}
There exists a cyclic skew-symmetric $(2,p,p)$-AONT for any prime $p$.
\end{corollary}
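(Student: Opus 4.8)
The plan is to show that the matrix $Q$ of Construction \ref{1-skew} is itself a linear $(2,p,p)$-AONT, and then to read the words ``cyclic'' and ``skew-symmetric'' off its shape. Two structural facts come for free from the entry formula $Q(s,t)=\frac{\alpha^t}{\alpha^t-\alpha^s}$ (for $1\le s,t\le p-1$, $s\ne t$). First, replacing $(s,t)$ by $(s+1,t+1)$ multiplies numerator and denominator by $\alpha$ and leaves the entry unchanged, so the lower right $(p-1)\times(p-1)$ block $Q_1$ is cyclic. Second, $Q(s,t)+Q(t,s)=\frac{\alpha^t}{\alpha^t-\alpha^s}+\frac{\alpha^s}{\alpha^s-\alpha^t}=1$, which is exactly $1$-skew-symmetry. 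I would also record the key link to earlier work: up to the harmless cyclic relabeling $\alpha^0=\alpha^{p-1}$ of row/column indices, $Q_1$ is the transpose of the matrix $P$ of Construction \ref{primepower}, since $P(t,s)=\frac{\alpha^t}{\alpha^t-\alpha^s}=Q(s,t)$.

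With these in hand, the substance is to verify the two hypotheses of the characterization \cite[Lemma 1]{DES2016}: that every $2\times 2$ submatrix of $Q$ is invertible, and that $Q$ is invertible. For the $2\times 2$ condition I would split into cases according to how the two chosen rows and columns meet the bordering all-$1$'s first row and column (recall $Q$ is in type $p$ standard form). A $2\times 2$ submatrix lying entirely inside $Q_1$ is the transpose of a $2\times 2$ submatrix of $P$, hence invertible by Lemma \ref{two2}. A submatrix using the corner cell has the form $\left(\begin{smallmatrix}0&1\\1&\ast\end{smallmatrix}\right)$, with determinant $-1$. Every remaining submatrix uses exactly one border line together with two interior lines, so its determinant is a difference $Q(j,j')-Q(j,i')$ of two entries from a single row of $Q$ (or, symmetrically, from a single column); such determinants are nonzero precisely when that row (column) has pairwise distinct entries.

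The main obstacle, which also powers the invertibility proof, is therefore to show that each row and each column of $Q$ is a permutation of $\f{p}$. For a fixed interior column $t$, the entries are $Q(0,t)=1$, the diagonal entry $Q(t,t)=0$, and the values $\frac{\alpha^t}{\alpha^t-\alpha^s}$; writing $v=\alpha^t$ and letting $u=\alpha^s$ run over $\f{p}^{\ast}\setminus\{v\}$, the substitution $w=v-u$ identifies this set with $\{\,v/w : w\in\f{p}\setminus\{0,v\}\,\}=\f{p}^{\ast}\setminus\{1\}$, so the full column is $\{0\}\cup\{1\}\cup(\f{p}\setminus\{0,1\})=\f{p}$; an identical computation treats each row. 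This permutation property makes all the border $2\times 2$ determinants nonzero, and it feeds directly into the argument of Remark \ref{rem}: summing all rows of $Q$ annihilates every interior column and leaves $-1$ in the first column, so a single cofactor expansion gives $\det Q=-\det Q_1$. Since $Q_1=P^{\mathsf T}$ and $\operatorname{rank}(P)=\Phi(p)=p-1$ by Lemma \ref{rank2}, with $P$ of size $(p-1)\times(p-1)$, the block $Q_1$ has full rank and is invertible, whence so is $Q$. Invoking \cite[Lemma 1]{DES2016}, $Q$ is a linear $(2,p,p)$-AONT; being in type $p$ standard form with $Q_1$ cyclic and $1$-skew-symmetric, it is the desired cyclic skew-symmetric $(2,p,p)$-AONT, which establishes the corollary (and, replacing $Q_1$ by $\tau Q_1$, the $\tau$-skew-symmetric variant for every $\tau\in\f{p}\setminus\{0\}$).
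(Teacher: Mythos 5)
Your proposal is correct and follows essentially the same route as the paper: it takes the matrix $Q$ of Construction \ref{1-skew}, reads off cyclicity and $1$-skew-symmetry from the entry formula, and reduces both the invertibility of $Q$ and the $2\times 2$ conditions to the corresponding properties of $P$ (Lemmas \ref{two2} and \ref{rank2}) via the argument of Remark \ref{rem}. In fact you supply details the paper leaves implicit, notably the verification that each row and column of $Q$ is a permutation of $\mathbb{F}_p$, which is exactly what is needed for the border $2\times 2$ submatrices and for the row-summing step.
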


We should note that Construction \ref{1-skew} gives a theoretical explanation of the exhaustive searches in \cite[Section \uppercase\expandafter{\romannumeral3}.B]{Stinson2017}. When $p=7$, $5$ is a primitive element of $\f{7}$. Suppose $\tau=6$, we choose the lower right $6$ by $6$ submatrix as $6Q_1$ in Construction \ref{1-skew}, then the matrix $Q$ would be

$$\left(
  \begin{array}{ccccccc}
    0&1 & 1 & 1 & 1 & 1 & 1 \\
    1&0&2 & 5 & 3 & 1 & 4 \\
     1&4&0&2 & 5 & 3 & 1   \\
     1&1&4&0&2 & 5 & 3   \\
    1 &3&1&4&0&2 & 5   \\
     1&5 &3&1&4&0&2   \\
     1&2&5 &3&1&4&0  \\
  \end{array}
\right),$$
which is coincident with \cite[Example 35]{Stinson2017}.

\subsection{Dual property of linear AONTs}
In this subsection, we will show the dual property of linear $(t,s,q)$-AONT.

Let $A$ be an $N$ by $k$ array whose entries are elements chosen from an alphabet $X$ of size $v$. We will refer to $A$ as an $(N,k,v)$-array. Suppose the columns of $A$ are labelled by the elements in the set $C = \{1,\ldots,k\}$. Let $D \subset C$ and define $A_D$ to be the subarray obtained from $A$ by deleting all the columns $c \not\in D$. We say that $A$ is unbiased with respect to $D$ if the rows of $A_D$ contain every $|D|$-tuple of elements of $X$ exactly $N/v^{|D|}$ times.
The following result characterizes $(t,s,v)$-AONT in terms of arrays that are unbiased with respect to certain subsets of columns.

\begin{theorem}\label{OA}
{\rm (\cite[Theorem 34]{DES2016})} A $(t, s, v)$-AONT is equivalent to a $(v^s, 2s, v)$-array that is unbiased with respect to the following subsets of columns:

$1.$ $\{1,\ldots,s\}$,

$2.$ $\{s+1,\ldots,2s\}$, and

$3.$ $I\cup \{s+1,\ldots,2s\}\setminus J$, for all $I\subset \{1,\ldots,s\}$ with $| I |=t$ and all $J\subset \{s+1,\ldots,2s\}$ with $| J | = t$.

\end{theorem}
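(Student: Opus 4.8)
The plan is to prove the equivalence in Theorem~\ref{OA} by translating the two defining properties of a $(t,s,v)$-AONT directly into unbiasedness conditions on an associated $(v^s,2s,v)$-array, and conversely. First I would construct the array explicitly from a given AONT $\phi$: let the $N=v^s$ rows be indexed by all input $s$-tuples $x=(x_1,\dots,x_s)\in X^s$, and for each such $x$ put the row $(x_1,\dots,x_s,y_1,\dots,y_s)$, where $(y_1,\dots,y_s)=\phi(x)$. Thus the first $s$ columns record the input and the last $s$ columns record the output. With this dictionary in place, the three unbiasedness conditions should correspond transparently to the AONT axioms, so the bulk of the work is verifying each correspondence and then checking that the construction is reversible.

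The key steps, in order, are as follows. Condition~$1$ (unbiasedness with respect to $\{1,\dots,s\}$) holds automatically, since the rows are indexed by \emph{all} of $X^s$, so every input $s$-tuple occurs exactly $N/v^{s}=1$ time. Condition~$2$ (unbiasedness with respect to $\{s+1,\dots,2s\}$) is exactly the statement that $\phi$ is a bijection: as $x$ ranges over $X^s$, the outputs $\phi(x)$ range over every element of $X^s$ exactly once, i.e.\ each output $s$-tuple appears $N/v^{s}=1$ time. Condition~$3$ is where the all-or-nothing security property enters. Fix $I\subset\{1,\dots,s\}$ with $|I|=t$ and $J\subset\{s+1,\dots,2s\}$ with $|J|=t$; the subset $I\cup(\{s+1,\dots,2s\}\setminus J)$ selects the $t$ input coordinates indexed by $I$ together with the $s-t$ output coordinates outside $J$. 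Unbiasedness here says that for every fixed choice of those $s-t$ output values, as we vary over the rows consistent with them the $t$ selected input symbols take every $t$-tuple in $X^t$ equally often, namely $N/v^{s}=1$ time each. This is precisely the information-theoretic requirement that fixing any $s-t$ outputs leaves any $t$ inputs completely undetermined (uniformly distributed). To make this rigorous I would phrase the security property as the conditional distribution of the $t$ inputs given the $s-t$ outputs being uniform, and match the counting constant $N/v^{|D|}$ with $|D|=s$ in all three cases.

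For the converse direction I would start from an arbitrary $(v^s,2s,v)$-array $A$ satisfying conditions $1$--$3$ and recover a map $\phi$. Condition~$1$ guarantees that the first $s$ columns realize each input $s$-tuple exactly once, so the rows are in bijection with $X^s$ and we may define $\phi(x)$ to be the last $s$ entries of the unique row whose first $s$ entries equal $x$; this makes $\phi:X^s\to X^s$ well defined. Condition~$2$ then forces $\phi$ to be a bijection, and condition~$3$ yields the undeterminedness property by reversing the counting argument above. The main obstacle I expect is not any single computation but stating the ``completely undetermined in an information-theoretic sense'' clause precisely enough that it becomes literally interchangeable with the combinatorial unbiasedness count; once the correct normalization constant $N/v^{|D|}$ is pinned down and the indexing of the three column-subsets is handled carefully, each of the three conditions reduces to a short verification. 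Since Theorem~\ref{OA} is quoted from \cite{DES2016}, I would keep the argument brief and rely on this explicit row-construction as the backbone of both implications.
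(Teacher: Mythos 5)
Your proposal is correct: the row construction $(x_1,\dots,x_s,y_1,\dots,y_s)$ with $y=\phi(x)$, the identification of conditions 1 and 2 with totality and bijectivity, and the reading of condition 3 as uniformity of the conditional distribution of the $t$ selected inputs given any fixed $s-t$ outputs (each $s$-tuple on the chosen $s$ columns occurring exactly $N/v^{s}=1$ time) is exactly the standard argument. The paper itself gives no proof of this statement --- it is quoted from \cite[Theorem 34]{DES2016} --- but your sketch matches the proof given there, including the correct orientation of the columns (first $s$ inputs, last $s$ outputs), so there is nothing to correct.
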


It is easy to see that if a $(v^s, 2s, v)$-array $A$ satisfies the conditions in Theorem \ref{OA}, then the array $\widetilde{A}$ obtained by interchanging the first $s$ columns and the last $s$ columns of $A$ is unbiased with respect to the following subsets of columns:

$1.$ $\{1,\ldots,s\}$,

$2.$ $\{s+1,\ldots,2s\}$, and

$3.$ $I\cup \{s+1,\ldots,2s\}\setminus J$, for all $I\subset \{1,\ldots,s\}$ with $| I |=s-t$ and all $J\subset \{s+1,\ldots,2s\}$ with $| J | = s-t$.

\noindent Therefore, $\widetilde{A}$ is an $(s-t, s, v)$-AONT.

\begin{theorem}
\label{dual-nonlinear}
If there exists a $(t, s, v)$-AONT, then there exists an $(s-t, s, v)$-AONT.
\end{theorem}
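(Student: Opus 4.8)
The plan is to pass through the combinatorial characterization of AONTs given in Theorem~\ref{OA} and exploit the symmetry between the ``input'' and ``output'' columns. Starting from a $(t,s,v)$-AONT, Theorem~\ref{OA} yields a $(v^s,2s,v)$-array $A$ that is unbiased with respect to (1) the column set $\{1,\ldots,s\}$, (2) the column set $\{s+1,\ldots,2s\}$, and (3) every set of the form $I\cup(\{s+1,\ldots,2s\}\setminus J)$ with $I\subset\{1,\ldots,s\}$, $J\subset\{s+1,\ldots,2s\}$ and $|I|=|J|=t$. I would then form the array $\widetilde{A}$ by interchanging the first $s$ columns with the last $s$ columns, and show that $\widetilde{A}$ satisfies exactly the conditions of Theorem~\ref{OA} corresponding to an $(s-t,s,v)$-AONT; applying that equivalence in the reverse direction then produces the desired transform.

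The verification splits along the three unbiasedness conditions. Conditions (1) and (2) are immediate: swapping the two halves of the columns merely interchanges the roles of $\{1,\ldots,s\}$ and $\{s+1,\ldots,2s\}$, and $A$ is unbiased with respect to both, so $\widetilde{A}$ is as well. The substance is condition (3). For $\widetilde{A}$ to define an $(s-t,s,v)$-AONT we need unbiasedness with respect to every set $I'\cup(\{s+1,\ldots,2s\}\setminus J')$ with $|I'|=|J'|=s-t$. Reading such a set back in the original coordinates of $A$, the part $I'$ selects $s-t$ of the original output columns while $\{s+1,\ldots,2s\}\setminus J'$ selects $t$ of the original input columns. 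The key observation is that this is precisely a set consisting of $t$ input columns together with $s-t$ output columns---which is exactly the shape of the sets $I\cup(\{s+1,\ldots,2s\}\setminus J)$ (with $|I|=t$) for which $A$ is already known to be unbiased. Thus condition (3) for the $(s-t)$-transform on $\widetilde{A}$ is literally the same family of column sets as condition (3) for the $t$-transform on $A$, only with the labels $I$ and the complement of $J$ interchanged.

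The main (and essentially only) obstacle is the index bookkeeping: one must check carefully that, after the column swap, the parameter $t$ counting selected input columns turns into $s-t$ counting output columns and vice versa, so that the two families of subsets in condition (3) genuinely coincide rather than merely resemble one another. Once this self-duality of the pattern ``$t$ input columns plus $s-t$ output columns'' is pinned down, no computation remains---the uniform-coverage data of $A$ transfer verbatim to $\widetilde{A}$. I would conclude by invoking Theorem~\ref{OA} once more, now in the direction ``array $\Rightarrow$ AONT,'' to obtain the $(s-t,s,v)$-AONT.
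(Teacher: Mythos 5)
Your proposal is correct and follows exactly the paper's argument: the paper also swaps the first $s$ and last $s$ columns of the $(v^s,2s,v)$-array from Theorem~\ref{OA} and observes that the condition-(3) subsets for the $(s-t)$-transform on $\widetilde{A}$ are precisely the condition-(3) subsets for the $t$-transform on $A$. The index bookkeeping you flag as the only delicate point is handled the same way, so nothing further is needed.
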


By the definition of linear AONT, i.e., the relationship between the first $s$ columns and the last $s$ columns as in (\ref{def-LAONT}), interchanging the first $s$ columns and the last $s$ columns of a linear $(t, s, q)$-AONT yields a linear $(t-s, s, q)$-AONT by Theorem \ref{dual-nonlinear}. So, the matrix $M^{-1}$ must satisfy that every $t-s$ by $t-s$ submatrix is invertible.

\begin{theorem}
\label{dual-LinearAONT}
If there exists a linear $(t, s, q)$-AONT with the corresponding matrix $M$, then the linear AONT defined by $M^{-1}$ is a linear $(s-t, t, q)$-AONT.
\end{theorem}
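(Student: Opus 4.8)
The plan is to reduce the statement to the characterization in Lemma 1 and then invoke the classical Jacobi identity relating complementary minors of $M$ and $M^{-1}$; note that since $M^{-1}$ is an $s$ by $s$ matrix, the AONT it defines has parameters $(s-t,s,q)$. By Lemma 1, $M$ defines a linear $(t,s,q)$-AONT precisely when $M$ is invertible and every $t$ by $t$ submatrix of $M$ is invertible, while $M^{-1}$ defines a linear $(s-t,s,q)$-AONT precisely when every $(s-t)$ by $(s-t)$ submatrix of $M^{-1}$ is invertible (the invertibility of $M^{-1}$ itself being automatic). So it suffices to prove: if every $t$ by $t$ submatrix of $M$ is nonsingular, then every $(s-t)$ by $(s-t)$ submatrix of $M^{-1}$ is nonsingular.

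First I would fix index sets $I,J\subseteq\{1,\dots,s\}$ with $|I|=|J|=s-t$ and consider the submatrix of $M^{-1}$ on rows $I$ and columns $J$, which I denote $(M^{-1})[I\mid J]$. The key tool is Jacobi's theorem on minors of the inverse: writing $I^c,J^c$ for the complements in $\{1,\dots,s\}$, one has
$$\det\big((M^{-1})[I\mid J]\big)=(-1)^{\sigma(I)+\sigma(J)}\,\frac{\det\big(M[J^c\mid I^c]\big)}{\det M},$$
where $\sigma(I)=\sum_{i\in I}i$ and $M[J^c\mid I^c]$ is the submatrix of $M$ on rows $J^c$ and columns $I^c$. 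Crucially, $M[J^c\mid I^c]$ has size $t$ by $t$, since $|J^c|=|I^c|=s-(s-t)=t$.

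With this identity in hand the conclusion is immediate. By hypothesis every $t$ by $t$ submatrix of $M$ is invertible, so $\det\big(M[J^c\mid I^c]\big)\ne 0$; and $\det M\ne 0$ as $M$ is invertible. Hence the right-hand side is nonzero, forcing $\det\big((M^{-1})[I\mid J]\big)\ne 0$. Since $I,J$ were arbitrary of size $s-t$, every $(s-t)$ by $(s-t)$ submatrix of $M^{-1}$ is invertible, and Lemma 1 then yields that $M^{-1}$ defines a linear $(s-t,s,q)$-AONT.

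The main obstacle is establishing, or correctly invoking, Jacobi's identity — in particular the bookkeeping that the rows $I$ and columns $J$ of the minor of $M^{-1}$ correspond to the complementary columns $I^c$ and rows $J^c$ of $M$ (with the roles of rows and columns transposed), together with tracking the sign (which is irrelevant to the nonvanishing we need). If one wants a self-contained derivation, I would obtain the identity from the adjugate formula $M^{-1}=\operatorname{adj}(M)/\det M$ combined with the generalized Laplace expansion, or equivalently from a Schur-complement computation on a suitably bordered block matrix. Alternatively, and more in keeping with the surrounding discussion, the same conclusion follows purely combinatorially: interchanging the first $s$ and last $s$ columns of the array of the $(t,s,q)$-AONT produces the array of the inverse map $y\mapsto yM$, which by Theorem \ref{dual-nonlinear} is an $(s-t,s,q)$-AONT; matching this inverse map against the convention in (\ref{def-LAONT}) identifies its defining matrix as $M^{-1}$, the only subtlety there being the convention-matching.
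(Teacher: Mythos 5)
Your proof is correct, but it takes a genuinely different route from the paper. The paper derives the theorem combinatorially: it invokes the array characterization of AONTs (Theorem \ref{OA}), observes that interchanging the first $s$ and last $s$ columns of the array turns a $(t,s,q)$-AONT into an $(s-t,s,q)$-AONT (Theorem \ref{dual-nonlinear}), and then matches this column swap against the convention in (\ref{def-LAONT}) to identify the defining matrix of the swapped transform as $M^{-1}$ --- exactly the alternative you sketch in your closing sentences. Your primary argument instead stays entirely inside linear algebra: you reduce to the submatrix characterization of linear AONTs and apply Jacobi's complementary-minor identity, $\det\bigl((M^{-1})[I\mid J]\bigr)=\pm\det\bigl(M[J^c\mid I^c]\bigr)/\det M$, with the bookkeeping (complementation plus transposition of the roles of rows and columns) handled correctly. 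The Jacobi route is self-contained, gives the statement as an equivalence for free, and yields the precise quantitative correspondence between the $(s-t)\times(s-t)$ minors of $M^{-1}$ and the complementary $t\times t$ minors of $M$; the paper's route is softer but exhibits the linear case as a specialization of the general (nonlinear) duality, which fits the surrounding development. One small point: the theorem as printed reads ``$(s-t,t,q)$-AONT,'' which is evidently a typo for ``$(s-t,s,q)$-AONT''; you silently correct this, and your corrected reading is the intended one.
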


By Theorem \ref{upper} and \ref{dual-LinearAONT}, we have the following.

\begin{corollary}
There is no linear $(q-1,q+1,q)$-AONT for any prime power $q>2$.
\end{corollary}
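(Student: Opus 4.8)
The plan is to argue by contradiction and exploit the duality packaged in Theorem \ref{dual-LinearAONT}, thereby reducing the claim to the already-established non-existence result of Theorem \ref{upper}. Suppose, contrary to the assertion, that for some prime power $q>2$ there exists a linear $(q-1,q+1,q)$-AONT, and let $M$ denote the corresponding invertible $(q+1)$ by $(q+1)$ matrix over $\f{q}$.

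First I would instantiate the dual property with the parameters $t=q-1$ and $s=q+1$. Theorem \ref{dual-LinearAONT} sends a linear $(t,s,q)$-AONT to a linear AONT in which the security parameter is replaced by $s-t$; here $s-t=(q+1)-(q-1)=2$. Concretely, the matrix $M^{-1}$ then defines a linear $(2,q+1,q)$-AONT — equivalently, every $2$ by $2$ submatrix of $M^{-1}$ is invertible, which is exactly the content supplied by Theorem \ref{dual-LinearAONT} via the submatrix characterization of linear AONTs.

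The final step is to invoke Theorem \ref{upper}, which asserts that no linear $(2,q+1,q)$-AONT exists for any prime power $q>2$. This directly contradicts the linear $(2,q+1,q)$-AONT produced in the previous step, and so the hypothesized $(q-1,q+1,q)$-AONT cannot exist, proving the corollary.

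I do not expect a genuine obstacle here: the argument is essentially a single application of duality followed by the known upper bound. The only point demanding care is the parameter bookkeeping, namely confirming that the dual of $(q-1,q+1,q)$ is precisely $(2,q+1,q)$ and that the hypothesis $q>2$ is transported unchanged, so that Theorem \ref{upper} applies verbatim to the derived transform.
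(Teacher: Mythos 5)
Your proposal is correct and follows exactly the paper's intended argument: the corollary is stated as an immediate consequence of Theorem \ref{dual-LinearAONT} (dualizing $(q-1,q+1,q)$ to $(2,q+1,q)$ via $M^{-1}$) combined with the non-existence result of Theorem \ref{upper}. The parameter bookkeeping you check ($s-t=2$) is precisely the whole content of the paper's one-line proof.
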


By Theorem \ref{dual-LinearAONT} and the constructions in Theorems \ref{main}, \ref{main2} and  \cite[Theorem 11]{Stinson2017}, we have the following existence results.

\begin{corollary}
Suppose $p$ is a prime. Then there exists a linear $(p-2,p,p)$-AONT over $\mathbb{F}_p$.
\end{corollary}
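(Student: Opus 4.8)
The plan is to obtain this corollary as an immediate specialization of the dual property for linear AONTs, applied to the explicit transform produced in Theorem~\ref{main}. First I would take, for a given prime $p$, the linear $(2,p,p)$-AONT guaranteed by Theorem~\ref{main}, and let $M$ denote its associated invertible $p\times p$ matrix over $\mathbb{F}_p$; by \cite[Lemma~1]{DES2016} this matrix has the property that every $2\times 2$ submatrix is invertible.

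Next I would invoke Theorem~\ref{dual-LinearAONT} with the parameters $t=2$, $s=p$ and $q=p$. That theorem asserts that passing from $M$ to its inverse $M^{-1}$ (equivalently, interchanging the first $s$ and the last $s$ columns of the associated $(q^{s},2s,q)$-array of Theorem~\ref{OA}) turns a linear $(t,s,q)$-AONT into a linear $(s-t,s,q)$-AONT, preserving both the tuple length and the alphabet size while replacing the threshold $t$ by $s-t$. Specializing to $t=2$ and $s=p$ yields exactly a linear $(p-2,p,p)$-AONT, which is the desired conclusion. Phrased via \cite[Lemma~1]{DES2016}, this says that every $(p-2)\times(p-2)$ submatrix of $M^{-1}$ is invertible, but no separate verification is needed, since that is precisely what the dual theorem delivers.

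I do not expect any genuine obstacle here: the only point requiring attention is the bookkeeping of the three parameters, namely confirming that the dual operation fixes $s=p$ and $q=p$ and sends $t=2$ to $s-t=p-2$. All of the substantive work---the equivalence of AONTs with unbiased arrays (Theorem~\ref{OA}) and the column-interchange argument underlying Theorem~\ref{dual-LinearAONT}---has already been carried out, so the corollary reduces to a one-line instantiation combining Theorems~\ref{main} and~\ref{dual-LinearAONT}.
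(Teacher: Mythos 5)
Your proposal is correct and matches the paper's own argument exactly: the paper derives this corollary by applying Theorem~\ref{dual-LinearAONT} to the linear $(2,p,p)$-AONT of Theorem~\ref{main}, which is precisely your instantiation with $t=2$, $s=p$, $q=p$. (You also correctly read the conclusion of Theorem~\ref{dual-LinearAONT} as a linear $(s-t,s,q)$-AONT, which is clearly the intended statement despite the typo in the paper's wording.)
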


\begin{corollary}
Suppose $q$ is a prime power. Then there exists a linear $(\Phi(q)-2,\Phi(q),q)$-AONT over $\mathbb{F}_q$.
\end{corollary}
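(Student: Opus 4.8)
The plan is to read this corollary off from the existence result of Theorem \ref{main2} together with the duality of Theorem \ref{dual-LinearAONT}, so that no new construction is required. First I would apply Theorem \ref{main2} to obtain, for the given prime power $q$, a linear $(2,\Phi(q),q)$-AONT; let $M$ denote its corresponding invertible matrix of order $\Phi(q)$ over $\mathbb{F}_q$, concretely the matrix built from $P$ in Construction \ref{primepower}. By the characterization of D'Arco et al.\ quoted in the introduction, this is equivalent to saying that $M$ is invertible and every $2\times 2$ submatrix of $M$ is invertible.

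Next I would invoke the duality. Setting $t=2$ and $s=\Phi(q)$, Theorem \ref{dual-LinearAONT} asserts that the linear AONT defined by $M^{-1}$ is a linear $(s-t,s,q)$-AONT. Substituting $s-t=\Phi(q)-2$ and $s=\Phi(q)$ yields precisely a linear $(\Phi(q)-2,\Phi(q),q)$-AONT, which is the claim. Equivalently, the discussion preceding Theorem \ref{dual-LinearAONT} shows that $M^{-1}$ is an invertible matrix of order $\Phi(q)$ all of whose $(s-t)\times(s-t)$ submatrices are invertible, and the characterization lemma then certifies that $M^{-1}$ realizes the desired transform.

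The argument is essentially parameter bookkeeping, so I do not expect a genuine analytic obstacle; the one point that warrants care is matching indices in the duality statement, namely confirming that the number of coordinates is preserved ($s\mapsto s$) while the threshold is complemented ($t\mapsto s-t$), and checking that the degenerate boundary cases (for instance small $q$ with $\Phi(q)-2\le 0$) are vacuous or trivial and may be excluded or handled separately. I would therefore present the corollary as an immediate consequence of Theorems \ref{main2} and \ref{dual-LinearAONT}, in exactly the same one-line manner in which the preceding corollary deduces the linear $(p-2,p,p)$-AONT from Theorem \ref{main}.
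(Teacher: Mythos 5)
Your proposal is correct and follows exactly the paper's route: the corollary is deduced as an immediate consequence of Theorem \ref{main2} (existence of a linear $(2,\Phi(q),q)$-AONT) combined with the duality of Theorem \ref{dual-LinearAONT} applied with $t=2$ and $s=\Phi(q)$. No further comment is needed.
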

\begin{corollary}
Suppose $q=2^n$, $q-1$ is a prime. Then there exists a linear $(q-3,q-1,q)$-AONT over $\mathbb{F}_q$.
\end{corollary}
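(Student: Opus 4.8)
The plan is to obtain this corollary as an immediate application of the dual property of linear AONTs (Theorem~\ref{dual-LinearAONT}) to a suitable known transform. The key observation is that under the hypotheses $q=2^n$ and $q-1$ prime, Theorem~\ref{power} supplies a linear $(2,s,q)$-AONT for every $s\le q-1$. Taking $s$ as large as the theorem allows, namely $s=q-1$, yields a linear $(2,q-1,q)$-AONT over $\mathbb{F}_q$. I would let $M$ denote the invertible $(q-1)\times(q-1)$ matrix realizing this transform, so that, by the characterization of linear AONTs in terms of invertibility of $2\times 2$ submatrices, every $2\times 2$ submatrix of $M$ is invertible.

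First I would fix the parameters $t=2$ and $s=q-1$ and check that the hypothesis $s\le q-1$ of Theorem~\ref{power} is satisfied (with equality), so that the $(2,q-1,q)$-AONT genuinely exists. Then I would invoke the duality of Theorem~\ref{dual-LinearAONT}: since $M$ defines a linear $(t,s,q)$-AONT, the inverse matrix $M^{-1}$ is the matrix of the dual linear AONT, in which the number of inputs remains $s=q-1$ while the security threshold becomes $s-t$; that is, $M^{-1}$ defines a linear $(s-t,s,q)$-AONT. Substituting $t=2$ and $s=q-1$ gives $s-t=q-3$, so $M^{-1}$ is the matrix of a linear $(q-3,q-1,q)$-AONT over $\mathbb{F}_q$, which is exactly the claimed object.

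I expect no genuine obstacle here, as the statement is a direct corollary of two earlier results; the only points requiring care are bookkeeping. One must confirm that the number of inputs $s=q-1$ is preserved under dualization (only the threshold drops, from $t$ to $s-t$), and that the alphabet size $q$ together with the hypotheses $q=2^n$ and $q-1$ prime carry over unchanged from Theorem~\ref{power}. Since Theorem~\ref{power} already guarantees the $(2,q-1,q)$-AONT under precisely these hypotheses, the whole argument reduces to a single application of Theorem~\ref{dual-LinearAONT} with the correct parameter substitution.
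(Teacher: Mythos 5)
Your proposal is correct and matches the paper's own (implicit) proof exactly: the paper derives this corollary by applying the duality result of Theorem~\ref{dual-LinearAONT} to the linear $(2,q-1,q)$-AONT guaranteed by Theorem~\ref{power}, precisely as you do. The only point worth noting is that the paper's statement of Theorem~\ref{dual-LinearAONT} contains an apparent typo (``$(s-t,t,q)$'' should read ``$(s-t,s,q)$'', consistent with Theorem~\ref{dual-nonlinear}), and you have correctly used the intended version.
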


\begin{remark}
The above result requires that $2^n-1$ is a (Mersenne) prime. Here are a couple of results on Mersenne primes from {\rm \cite{MersennePrime}}. The first few Mersenne primes occur for
$$n = 2,3,5,7,13,31,61,89,107,127.$$
As Nasr Esfahani et al. pointed out {\rm \cite{Stinson2017}} that there were $50$ known Mersenne primes, the largest being
$2^{77232917} -1$, which was discovered in January $2018$.
\end{remark}

\section{General AONT}\label{3}
In this section, we present a recursive construction for general AONTs and a new relationship between AONTs and orthogonal arrays. Based on the observations above, we get a general construction for nonlinear $(2,3,n)$-AONT, except for $n=2,6$.

We begin with a direct product construction.
\begin{lemma}[{\rm{Product Construction}}]\label{productconstruction}
If there is a $(t,s,n)$-AONT and a $(t,s,m)$-AONT, then there exists a $(t,s,mn)$-AONT.
\end{lemma}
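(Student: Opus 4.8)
The plan is to give a direct product construction by combining the two given AONTs coordinate-wise. Recall that a $(t,s,n)$-AONT is a bijection $\phi: X^s \to X^s$ on an alphabet $X$ of size $n$, and a $(t,s,m)$-AONT is a bijection $\psi: Y^s \to Y^s$ on an alphabet $Y$ of size $m$. I would take the product alphabet $Z = X \times Y$, which has size $mn$, and define a map $\theta : Z^s \to Z^s$ by applying $\phi$ on the $X$-components and $\psi$ on the $Y$-components independently. Concretely, writing an input as $((x_1,y_1),\ldots,(x_s,y_s))$, I would set
\[
\theta\big((x_1,y_1),\ldots,(x_s,y_s)\big) = \big((u_1,w_1),\ldots,(u_s,w_s)\big),
\]
where $(u_1,\ldots,u_s) = \phi(x_1,\ldots,x_s)$ and $(w_1,\ldots,w_s) = \psi(y_1,\ldots,y_s)$. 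The claim is that $\theta$ is a $(t,s,mn)$-AONT.

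First I would verify that $\theta$ is a bijection. Since $\phi$ and $\psi$ are both bijections, the map $\theta$ is just their Cartesian product acting on independent coordinates, so it is invertible with inverse built from $\phi^{-1}$ and $\psi^{-1}$; this step is routine.

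The substantive step is the all-or-nothing property. I would use the characterization in Theorem~\ref{OA} (the equivalence with unbiased arrays), or argue directly from the definition. Working directly: fix any $s-t$ of the output coordinates of $\theta$, and choose any $t$ input coordinates; I must show these $t$ inputs are completely undetermined, i.e.\ every $t$-tuple over $Z$ occurs equally often as the value of those inputs. Because the $X$-part and $Y$-part evolve independently, fixing the chosen $s-t$ outputs of $\theta$ is equivalent to simultaneously fixing the corresponding $s-t$ outputs of $\phi$ and the corresponding $s-t$ outputs of $\psi$. The AONT property of $\phi$ guarantees the $t$ selected inputs' $X$-components range uniformly over $X^t$, and the AONT property of $\psi$ guarantees their $Y$-components range uniformly over $Y^t$; since the two coordinate streams are independent, the joint distribution over $Z^t = (X\times Y)^t$ is uniform. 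Hence the $t$ chosen inputs are completely undetermined, which is exactly what is required.

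The main obstacle is making the independence argument precise: one must check that the conditioning on fixed outputs in the product really does factor as independent conditioning in each component, so that the product of two uniform distributions is uniform on the product. This is where invoking Theorem~\ref{OA} streamlines matters, since unbiasedness with respect to a set of columns for the product array follows immediately from unbiasedness of each factor array together with the fact that the row-count $N$ multiplies as $n^s \cdot m^s = (mn)^s$, matching the required normalization $N/v^{|D|}$ for $v = mn$. Either route reduces the result to a bookkeeping check that I expect to be straightforward once the factorization is set up carefully.
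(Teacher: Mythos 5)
Your proposal is correct and matches the paper's proof: the paper forms exactly the same coordinate-wise product, pairing the rows of the $(n^s,2s,n)$-array and the $(m^s,2s,m)$-array to get a $((nm)^s,2s,nm)$-array, and leaves the unbiasedness factorization as a routine check via Theorem~\ref{OA}. Your direct description of the map $\theta$ and your remark about the row counts multiplying as $n^s\cdot m^s=(mn)^s$ is the same bookkeeping the paper has in mind.
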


\begin{proof}
Let $A=[(a_{i,j})]$ be an $(n^s,2s,n)$-array over $\mathbb{Z}_n$ corresponding to a $(t,s,n)$-AONT, and $B=[(b_{k,j})]$ be an $(m^s,2s,m)$-array over $\mathbb{Z}_m$ corresponding to a $(t,s,m)$-AONT.
For $1\leq i\leq n^s$ and $1\leq j\leq m^s$, denote
\[
H_{\{i,j\}}=\left ( (a_{i,1},b_{j,1}),(a_{i,2},b_{j,2}),\ldots,(a_{i,2s},b_{j,2s})\right ).
\]
It is routine to check that the array consisting of all row vectors $H_{\{i,j\}}$ where $1\leq i\leq n^s$ and $1\leq j\leq m^s$,
is a $((nm)^s,2s,nm)$-array over $\mathbb{Z}_n\times\mathbb{Z}_m$ corresponding to a $(t,s,nm)$-AONT. \qed
\end{proof}

To obtain a relation between AONTs and orthogonal arrays, Stinson \cite{Stinson2001} completely determined the existence of $(1,2,v)$-AONTs.

\begin{theorem}{\rm \cite[Theorem 3.4]{Stinson2001}}
There exists a $(1,2,v)$-AONT if and only if $v\ne 2,6$.
\end{theorem}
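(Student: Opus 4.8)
The plan is to route the problem through the array characterization of Theorem~\ref{OA} and then recognize the resulting object as a classical design. First I would specialize Theorem~\ref{OA} to the case $t=1$, $s=2$: a $(1,2,v)$-AONT is equivalent to a $(v^2,4,v)$-array $A$ that is unbiased with respect to $\{1,2\}$, $\{3,4\}$, and every set of the form $I\cup\{3,4\}\setminus J$ with $I\subset\{1,2\}$, $J\subset\{3,4\}$, and $|I|=|J|=1$. Enumerating the third family gives the four pairs $\{1,3\},\{1,4\},\{2,3\},\{2,4\}$, so together with $\{1,2\}$ and $\{3,4\}$ the array must be unbiased with respect to \emph{all} $\binom{4}{2}=6$ pairs of columns. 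Since $N=v^2$, being unbiased with respect to a pair of columns means every ordered pair of symbols occurs exactly $N/v^2=1$ time there; hence $A$ is precisely an orthogonal array $\mathrm{OA}(v^2,4,v,2)$ of strength $2$ and index $1$.

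The next step is to invoke the standard equivalence between such orthogonal arrays and mutually orthogonal Latin squares: an $\mathrm{OA}(v^2,k+2,v,2)$ of index unity is equivalent to $k$ mutually orthogonal Latin squares (MOLS) of order $v$. With $k=2$ this reduces the existence of a $(1,2,v)$-AONT to the existence of a pair of MOLS of order $v$, and the theorem then follows from the classical determination of the orders admitting two MOLS.

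For the \emph{only if} direction I would dispose of $v=2$ and $v=6$ directly through this equivalence: two MOLS of order $2$ are trivially impossible, while the nonexistence of two MOLS of order $6$ is Tarry's resolution of Euler's thirty-six officers problem. For the \emph{if} direction I would appeal to the Bose--Shrikhande--Parker theorem (the disproof of the Euler conjecture), which guarantees a pair of MOLS for every order $v\notin\{2,6\}$; combined with the forward translation above this yields the desired $(1,2,v)$-AONT. Alternatively, for a more self-contained treatment one could exhibit explicit pairs of MOLS: for prime powers $v\ge 3$ these arise from field arithmetic (equivalently, an invertible $2\times 2$ matrix over $\f{v}$ all of whose entries are nonzero, via the linear characterization of D'Arco et al.), and a general order $v$ is then assembled from its prime-power factors using the product construction of Lemma~\ref{productconstruction}.

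The main obstacle is the \emph{if} direction for the singly-even orders $v\equiv 2\pmod 4$ with $v\ge 10$. The product construction alone fails to cover these, because the factor of $2$ contributes no $(1,2,2)$-AONT, and they are exactly the orders for which Euler wrongly conjectured nonexistence; supplying MOLS here (or citing the deep Bose--Shrikhande--Parker construction) is the only genuinely nontrivial ingredient, everything else reducing to the routine translation of the first two steps and a check of the two exceptional small orders.
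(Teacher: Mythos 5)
Your proposal is correct and follows essentially the same route as the cited source (the paper itself only quotes this result from Stinson's 2001 paper without reproving it): specialize the unbiased-array characterization to see that a $(1,2,v)$-AONT is exactly an index-one orthogonal array $\mathrm{OA}(2,4,v)$, equivalently a pair of MOLS of order $v$, and then invoke Tarry for $v=6$ and Bose--Shrikhande--Parker for all $v\notin\{2,6\}$. Your closing remark correctly identifies that the product construction cannot replace the Bose--Shrikhande--Parker input for $v\equiv 2\pmod 4$, so no gap remains.
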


For prime powers $q$, the existence of $(1, s, q)$-AONT has been completely determined in \cite{Stinson2001}.

\begin{theorem}
{\rm \cite[Corollary 2.3]{Stinson2001}}
\label{linear(1, s, q)-AONT} There exists a linear $(1, s, q)$-AONT for all prime powers
$q > 2$ and for all positive integers $s$.
\end{theorem}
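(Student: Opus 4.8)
By \cite[Lemma 1]{DES2016} specialized to $t=1$, a linear $(1,s,q)$-AONT is nothing other than an invertible $s\times s$ matrix over $\mathbb{F}_q$ all of whose $1\times 1$ submatrices are invertible, that is, an invertible matrix with no zero entry. Thus the plan is purely constructive: for each positive integer $s$ and each prime power $q>2$, I would exhibit one such matrix. The sole role of the hypothesis $q>2$ is to guarantee a field element $c$ with $c\notin\{0,1\}$, which is the only ingredient the construction needs.

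Fix any $c\in\mathbb{F}_q\setminus\{0,1\}$ and define the $s\times s$ matrix $M$ by
\[
M_{i,j}=\begin{cases} 1 & \text{if } i\le j,\\ c & \text{if } i>j. \end{cases}
\]
Every entry equals $1$ or $c$, hence is nonzero, so the $1\times 1$ condition holds automatically. It remains to verify that $M$ is invertible, which I would do by the elementary operations replacing row $i$ by $(\text{row }i)-(\text{row }i+1)$ for $i=1,\dots,s-1$. Two consecutive rows agree except in column $i$, where the entries are $1$ and $c$; hence this turns row $i$ into $(1-c)\,e_i$ while leaving the last row $(c,\dots,c,1)$ untouched. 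Expanding the determinant of the resulting matrix along its last column gives $\det M=(1-c)^{\,s-1}$, which is nonzero since $c\ne 1$. Therefore $M$ is invertible with all entries nonzero, and by the characterization above it defines a linear $(1,s,q)$-AONT.

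The one genuine subtlety --- and the place where a careless choice breaks down --- is uniformity over all $s$ at the smallest admissible field $\mathbb{F}_3$. For example the tempting ``constant diagonal, constant off-diagonal'' matrix $bJ+(a-b)I$ (with $J$ the all-ones matrix, $I$ the identity, and $a,b$ distinct and nonzero) has determinant $(a-b)^{s-1}\bigl((s-1)b+a\bigr)$; over $\mathbb{F}_3$ the pair $\{a,b\}$ is forced to be $\{1,2\}$ and the second factor then vanishes whenever $s\equiv 2\pmod 3$, so that family fails to cover all $s$. The staircase matrix above sidesteps this entirely, since its determinant $(1-c)^{s-1}$ never vanishes once $c\ne1$, independently of $s$ and of $q$. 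The hypothesis $q>2$ is also sharp: for $q=2$ the only matrix with all nonzero entries is the all-ones matrix, of rank $1$, so no linear $(1,s,2)$-AONT exists once $s\ge2$.
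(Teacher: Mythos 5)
Your proposal is correct. Note first that the paper does not prove this statement at all --- it is imported by citation from Stinson's 2001 paper (Corollary 2.3 there), so there is no in-paper argument to compare against; what you have supplied is a self-contained replacement proof. Your reduction via \cite[Lemma 1]{DES2016} with $t=1$ (invertible matrix with all entries nonzero) is exactly the standard characterization, and your ``staircase'' matrix with $1$'s on and above the diagonal and $c\notin\{0,1\}$ below it verifies both conditions cleanly: the row operations $R_i \mapsto R_i - R_{i+1}$ for $i=1,\dots,s-1$ do yield $(1-c)e_i$ in each of the first $s-1$ rows, and the determinant $(1-c)^{s-1}$ is nonzero independently of $s$, $q$, and the characteristic. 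This is a genuinely different construction from the one in the cited source, which (like most treatments in the literature) works with a matrix of the form ``constant diagonal, constant off-diagonal'' and must therefore choose the diagonal entry to dodge the extra linear-in-$s$ factor in the determinant --- precisely the pitfall you identify over $\mathbb{F}_3$ when $s\equiv 2\pmod 3$. Your version buys uniformity: one formula, one determinant computation, no case analysis on $s$ modulo $p$; the only thing lost relative to the symmetric family is the aesthetic of a circulant/symmetric matrix, which is irrelevant here. Your closing remark that $q>2$ is sharp is also correct and is a nice touch.
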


Applying Lemma \ref{productconstruction} with the known linear $(1, s, q)$-AONT in Theorem \ref{linear(1, s, q)-AONT}, we could generalise the existence results of $(1,s,v)$-AONT for $s\geq 3$.

\begin{corollary}
There exists a  $(1, s, n)$-AONT for any integer $n\geq 3$, $n\not\equiv 2\pmod 4$ and for all positive integers $s$.
\end{corollary}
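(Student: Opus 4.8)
The plan is to build a $(1,s,n)$-AONT by factoring $n$ into prime powers and gluing the prime-power components together with the product construction of Lemma~\ref{productconstruction}. Write the canonical factorization $n=\prod_{i=1}^{k}q_i$, where each $q_i=p_i^{a_i}$ is a prime power.

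The decisive step is a short number-theoretic observation: under the hypotheses $n\geq 3$ and $n\not\equiv 2\pmod 4$, every prime-power factor $q_i$ exceeds $2$. I would argue by splitting into the two cases ``$n$ odd'' and ``$4\mid n$'', which together exhaust $n\not\equiv 2\pmod 4$. If $n$ is odd, then every $q_i$ is an odd prime power, hence $q_i\geq 3$. If $4\mid n$, then the factor corresponding to the prime $2$ is $2^{a}$ with $a\geq 2$, so it is at least $4$, while every remaining factor is an odd prime power $\geq 3$; in either case all $q_i>2$. The excluded residue class $n\equiv 2\pmod 4$ is precisely the situation in which $2$ appears to the first power, forcing the unusable factor $q_i=2$.

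With this in hand, for each $i$ Theorem~\ref{linear(1, s, q)-AONT} supplies a linear---and in particular a genuine---$(1,s,q_i)$-AONT over $\f{q_i}$, valid for every positive integer $s$ since each $q_i>2$. Finally I would invoke Lemma~\ref{productconstruction} repeatedly: starting from a $(1,s,q_1)$-AONT and a $(1,s,q_2)$-AONT, the lemma yields a $(1,s,q_1q_2)$-AONT, and an easy induction on $k$ combines all $k$ components into a $(1,s,\prod_i q_i)=(1,s,n)$-AONT.

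The construction is essentially a matter of bookkeeping, so there is no serious obstacle; the only point requiring care is the factorization observation above, namely recognizing that $n\not\equiv 2\pmod 4$ is exactly the condition ruling out a lone factor of $2$, for which no $(1,s,2)$-AONT is available from Theorem~\ref{linear(1, s, q)-AONT} (consistent with the non-existence of a $(1,2,2)$-AONT).
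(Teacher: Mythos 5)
Your proposal is correct and matches the paper's (implicit) argument exactly: factor $n$ into prime powers, observe that $n\geq 3$ and $n\not\equiv 2\pmod 4$ force every prime-power factor to exceed $2$, apply Theorem~\ref{linear(1, s, q)-AONT} to each factor, and glue with Lemma~\ref{productconstruction} by induction. Nothing is missing.
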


Next we will obtain a new relationship between AONTs and orthogonal arrays. A $(v^t, k, v)$-array over $X$ is called an orthogonal array and denoted by OA$(t,k,v)$ if it is unbiased with respect to every $t$-subset of columns $\{1,2,\ldots,k\}$.
The following relationship between OA and AONT is immediate from Theorem \ref{OA}.

\begin{corollary}[\rm{\cite[Corollary 35]{DES2016}}]
If there exists an OA$(s,2s,v)$, then there exists a $(t,s,v)$-AONT for all $t$ such that $1\leq t\leq s$.
\end{corollary}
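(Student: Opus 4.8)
The plan is to invoke Theorem \ref{OA} and observe that, for an orthogonal array of strength $s$, every column subset whose unbiasedness is demanded by the AONT characterization has cardinality exactly $s$, and is therefore automatically unbiased. First I would recall the definition: an OA$(s,2s,v)$ is a $(v^s,2s,v)$-array that is unbiased with respect to \emph{every} $s$-subset of its $2s$ columns. Since here $N=v^s$ and $|D|=s$, the unbiasedness condition forces each $s$-tuple to occur exactly $v^s/v^s=1$ time, which is precisely the usual strength-$s$ requirement.

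Next I would fix an arbitrary $t$ with $1\le t\le s$ and verify that each of the three families of column sets listed in Theorem \ref{OA} consists of sets of size exactly $s$. Conditions $1$ and $2$ concern the sets $\{1,\ldots,s\}$ and $\{s+1,\ldots,2s\}$, both of which visibly have $s$ elements. For condition $3$, a typical set is $I\cup\{s+1,\ldots,2s\}\setminus J$ with $I\subset\{1,\ldots,s\}$, $|I|=t$, and $J\subset\{s+1,\ldots,2s\}$, $|J|=t$. Because $I$ is disjoint from the block $\{s+1,\ldots,2s\}$ while $J$ lies inside it, the cardinality is $|I|+(s-|J|)=t+(s-t)=s$.

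Therefore every column subset appearing in Theorem \ref{OA} is an $s$-subset, and by the defining strength-$s$ property of the OA$(s,2s,v)$ the array is unbiased with respect to all of them. By Theorem \ref{OA}, the array is then a $(t,s,v)$-AONT; since $t$ was arbitrary in the range $1\le t\le s$, the conclusion holds for every such $t$, which completes the argument.

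There is essentially no obstacle here: the whole content of the proof is the cardinality computation $t+(s-t)=s$ for the sets in condition $3$, which shows that all of the unbiasedness requirements of a $(t,s,v)$-AONT are, for an orthogonal array of strength $s$, subsumed by a single instance of the strength-$s$ condition. The only point meriting a moment's care is confirming the disjointness of the ambient blocks of $I$ and $J$ so that the size count remains valid uniformly across every choice of $t$.
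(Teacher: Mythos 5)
Your proof is correct and is precisely the argument the paper intends: the corollary is stated as ``immediate from Theorem \ref{OA}'' (citing \cite[Corollary 35]{DES2016}), and the content of that immediacy is exactly your observation that all three families of column sets in Theorem \ref{OA} have cardinality $s$, so the strength-$s$ unbiasedness of an OA$(s,2s,v)$ covers them all at once, uniformly in $t$.
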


\begin{theorem}[\rm{\cite[Theorem 23]{Stinson2017}}]
Suppose there is a $(t,s,v)$-AONT. Then there is an OA$(t,s,v)$.
\end{theorem}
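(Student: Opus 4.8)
The plan is to use the array characterization of AONTs (Theorem \ref{OA}) and extract the orthogonal array by freezing a block of output coordinates. First I would invoke Theorem \ref{OA} to replace the given $(t,s,v)$-AONT by an equivalent $(v^s,2s,v)$-array $A$ whose $2s$ columns split into $s$ ``input'' columns $\{1,\ldots,s\}$ and $s$ ``output'' columns $\{s+1,\ldots,2s\}$, and which is unbiased with respect to the three families of column-subsets listed there. The desired OA$(t,s,v)$ will be produced as a subarray of the input columns of $A$.

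Concretely, I would single out the particular set $J=\{s+1,\ldots,s+t\}$ of $t$ output columns, so that its complement among the output columns is $K:=\{s+1,\ldots,2s\}\setminus J=\{s+t+1,\ldots,2s\}$, a block of $s-t$ columns. Fix an arbitrary value $c\in X^{s-t}$ and let $B$ be the subarray of $A$ consisting of those rows whose entries on the columns $K$ equal $c$, projected onto the input columns $\{1,\ldots,s\}$. Since unbiasedness of $A$ with respect to the whole last block $\{s+1,\ldots,2s\}$ (condition (2)) descends to unbiasedness with respect to the subset $K$, every $(s-t)$-tuple on $K$ occurs $v^s/v^{s-t}=v^t$ times; hence $B$ is a $(v^t,s,v)$-array, which is exactly the shape required for an OA$(t,s,v)$.

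It then remains to verify that $B$ has strength $t$. For this I would fix an arbitrary $t$-subset $I\subset\{1,\ldots,s\}$ of input columns and apply condition (3) of Theorem \ref{OA} with this $I$ and the chosen $J$: it asserts that $A$ is unbiased with respect to the $s$ columns $I\cup K$, so each of the $v^s$ possible tuples on $I\cup K$ occurs exactly once. In particular, for the fixed value $c$ on $K$ and each $t$-tuple on $I$ there is exactly one row of $A$, which means that within the $v^t$ rows defining $B$ the columns $I$ realize every element of $X^t$ exactly once. As $I$ was an arbitrary $t$-subset, $B$ is unbiased with respect to every $t$-subset of its columns, i.e.\ $B$ is an OA$(t,s,v)$, completing the argument.

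The step I expect to require the most care is keeping the bookkeeping of column indices straight: the whole argument hinges on the fact that a \emph{single} fixed choice of $J$ (equivalently, one fixed block $K$ of output coordinates together with one value $c$) works simultaneously for every input $t$-subset $I$, which is precisely what condition (3) supplies, since it is quantified over all such $I$ while leaving $J$ free to be chosen once. The only auxiliary observation needed is the harmless remark that unbiasedness with respect to a set of columns descends to unbiasedness with respect to any of its subsets, used to obtain the row-count on $K$ from condition (2); everything else is a direct index-$1$ counting argument.
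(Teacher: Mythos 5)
Your proof is correct. Note that the paper itself states this result only as a citation of \cite[Theorem 23]{Stinson2017} and supplies no proof of its own, so there is nothing internal to compare against; your argument --- fixing the $s-t$ output columns $K$ at a value $c$, using condition (2) of Theorem \ref{OA} to count $v^t$ surviving rows, and using condition (3) with the single fixed $J$ but all $t$-subsets $I$ to get index-$1$ unbiasedness on every $t$-subset of input columns --- is the standard derivation from the array characterization, and all the index bookkeeping checks out.
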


Roughly speaking, a $(t,s,v)$-AONT is a combinatorial configuration between OA$(t,s,v)$ and OA$(s,2s,v)$. The main difference of our results is that we do not construct an AONT directly from the structure of OA, actually, we take an OA as the auxiliary array in our construction.
\begin{theorem}\label{OA(4,v)-(2,3,v)-AONT}
If there exists an OA$(2,4,v)$, then there exists a $(2,3,v)$-AONT.
\end{theorem}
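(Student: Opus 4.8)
The plan is to invoke the array characterization of Theorem \ref{OA} with $t=2$, $s=3$: a $(2,3,v)$-AONT is the same thing as a $(v^3,6,v)$-array that is unbiased with respect to (1) the first three columns, (2) the last three columns, and (3) every triple of columns made of two of the first three and one of the last three. I would index the rows of the target array by $(a,b,c)\in X^3$ and take the first three columns to be $a,b,c$, so that condition (1) is automatic. Writing the last three columns as the output $(y_1,y_2,y_3)=\phi(a,b,c)$, condition (2) says exactly that $\phi$ is a bijection, while condition (3) says that each output $y_m$ is bijective in every single input when the other two inputs are held fixed: fixing two inputs pins down all but $v$ rows, and the remaining column of the chosen triple must then run through all of $X$. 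In short, I must produce three ``Latin-cube'' functions whose joint map $X^3\to X^3$ is a bijection.

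Next I would unpack the hypothesis. An OA$(2,4,v)$ of index unity is equivalent to a pair of orthogonal Latin squares $L,M$ of order $v$ on $X$ (equivalently a transversal design $\mathrm{TD}(4,v)$, whose four groups give the two coordinate maps and two MOLS). Here orthogonality means that $(x,y)\mapsto(L(x,y),M(x,y))$ is a bijection of $X^2$, and each of $L,M$ is a bijection in each of its two arguments. These are exactly the ingredients the auxiliary OA supplies, matching the preceding remark that the OA is used as an auxiliary array rather than being turned into an AONT directly.

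The construction I would propose sets, using $L\perp M$,
\[
y_1=L\bigl(a,\,L(b,c)\bigr),\qquad y_2=M\bigl(a,\,L(b,c)\bigr),\qquad y_3=M\bigl(M(b,c),\,a\bigr).
\]
Checking condition (3) is routine: abbreviating $e=L(b,c)$ and using that $L,M$ are Latin squares, each of $y_1,y_2,y_3$ is bijective in $a$, in $b$, and in $c$ separately, verified by peeling off one outer Latin-square evaluation at a time. The substantive point is condition (2), and this is where orthogonality is used twice. From $(y_1,y_2)=(L(a,e),M(a,e))$ the orthogonality of $L$ and $M$ recovers the pair $(a,e)=(a,L(b,c))$; knowing $a$, inverting the permutation $M(\cdot,a)$ applied to $y_3$ recovers $M(b,c)$; finally orthogonality applied to $(e,M(b,c))=(L(b,c),M(b,c))$ recovers $(b,c)$. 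Hence $\phi$ is invertible and all three conditions hold.

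The step I expect to be the main obstacle is precisely reconciling conditions (2) and (3): every output must depend bijectively on every input (so no output may drop a variable), yet the three outputs must be jointly invertible, and only two orthogonal squares are on hand. The asymmetric layering above---feeding $L(b,c)$ into an orthogonal pair with $a$, and re-injecting $a$ into $y_3$ through a further Latin-square evaluation---is the device that satisfies both demands simultaneously; the naive symmetric attempts (for instance letting one output be a function of only two of the inputs) are exactly what fail condition (3). With the statement established, combining it with the existence of two orthogonal Latin squares of every order $v\neq 2,6$ is what would then yield the $(2,3,v)$-AONT in the claimed range.
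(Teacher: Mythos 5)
Your proof is correct. It follows the same high-level strategy as the paper --- reduce to the unbiased-array characterization of Theorem \ref{OA} with $t=2$, $s=3$, and build the $(v^3,6,v)$-array out of the pair of orthogonal Latin squares carried by the OA$(2,4,v)$ --- but your explicit transform is genuinely different. Writing the OA's rows as $(i,j,L_1(i,j),L_2(i,j))$, the paper indexes the rows of the target array by $(i,j,x)$ and takes
\[
H_{i,j,x}=\bigl(L_1(i,j),\,L_2(i,j),\,x,\;\, j+x,\;L_1(i,x),\;L_2(i,x)\bigr),
\]
that is, it applies the orthogonal pair ``on the way in'' (the pair $(i,j)$ is encoded in the first two input coordinates via orthogonality) and spends a third, always-available Latin square --- addition in $\mathbb{Z}_v$ --- on the first output. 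You instead keep the input columns as the identity $(a,b,c)$ and apply the orthogonal pair ``on the way out,'' composing Latin-square evaluations so that only $L$ and $M$ are ever used and no group structure on the alphabet is invoked. Both constructions deliver the same conclusion; yours makes the nine sub-checks of condition (3) completely uniform (peel off one Latin-square layer at a time, each layer a permutation in the relevant variable), whereas the paper explicitly verifies only the column set $\{1,2,4\}$ and appeals to the symmetry of its formula for the remaining cases. Your inversion argument for condition (2) --- recover $(a,L(b,c))$ from $(y_1,y_2)$ by orthogonality, then $M(b,c)$ from $y_3$, then $(b,c)$ by a second application of orthogonality --- is complete and correct.
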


\begin{proof}
Let $A$ be an OA$(2,4,v)$ over $\mathbb{Z}_v$, whose rows are indexed by $C_{i,j}$, $i,j\in \mathbb{Z}_v$. Without loss of generality, we assume that $C_{i,j}=(i,j,L_1(i,j),L_2(i,j))$.
For $i,j,x\in \mathbb{Z}_v$, we construct $v^3$ row vectors as follows:
\[
H_{i,j,x}=(L_1(i,j),L_2(i,j),x,j+x,L_1(i,x),L_2(i,x)).
\]
Let $H$ be a $(v^3,6,v)$-array consisting of the $v^3$ row vectors above. We claim that $H$ is a $(2,3,v)$-AONT. To prove this claim, we check the conditions $(1)$, $(2)$ and $(3)$ of Theorem \ref{OA}.

Case 1. The first two conditions follow immediately from the definition of OA.

Case 2. Suppose we choose two columns $\{c_1,c_2\}$ from $\{1,2,3\}$ and one column $\{c_3\}$ from $\{4,5,6\}$, it suffices to establish the bijection between the row vectors chosen above and all $3$-tuples. By the symmetry of our construction, without loss of generality, we assume that $c_1=1,c_2=2$ and $c_3=4$. For any $3$-tuple $(a,b,c)\in \mathbb{Z}_v^3$, let $L_1(i,j)=a$, $L_2(i,j)=b$ and $j+x=c$. By the definition of OA, we could uniquely determine the index $i,j$ from the first two equalities, then uniquely determine the index $x$ from the last one.
The proof is completed.  \qed
\end{proof}

It is well known that there is an OA$(2,4,n)$ if and only if $n\neq 2,6$ \cite{BSP1960}. Applying Theorem \ref{OA(4,v)-(2,3,v)-AONT} yields the following.

\begin{corollary}
For any positive integer $n$ with $n\neq 2,6$, there is a $(2,3,n)$-AONT and a $(1,3,n)$-AONT.
\end{corollary}

\section{Conclusions}\label{4}
In this paper, we continue the study of t-all-or-nothing transforms over alphabets of arbitrary size. First we solved three open problems proposed in \cite{Stinson2017} and showed that there exists linear $(2,p,p)$-AONT. Then  for prime powers $q$, we construct the first infinite class of linear AONTs over $\mathbb{F}_q$ which is better than the linear AONTs defined by Cauchy matrices. Besides, we also presented a recursive construction for general AONTs and a new relationship between AONTs and orthogonal arrays.

\end{document}